\newtcolorbox{tcbdoublebox}[1][]{%
  enhanced jigsaw,
  sharp corners,
  colback=white,
  borderline={1pt}{-2pt}{black},
  fontupper={\setlength{\parindent}{20pt}},
  #1
}
\newtheorem{theorem}{Theorem}
\newtheorem{lemma}{Lemma}
\newtheorem{remark}{Remark}
\newtheorem{assumption}{Assumption}
\newtheorem{opt}{Optimization}
\newcommand{\sat}{\mathrm{sat}}
\newcommand{\sign}{\mathrm{sign}}
\newenvironment{proof}{\noindent {\em Proof.}}{\hfill \hspace*{1pt} \hfill $\square$}
\title{A Switching Strategy for Event-Trigger Control of Spacecraft Rendezvous}
\begin{document}
\thispagestyle{fancy}

\maketitle
\pagestyle{empty} 

\begin{authorList}{4cm} 
\addAuthor{Tommaso Del Carro}{PhD student, Escuela Tecnica Superior de Ingeniería, Universidad de Sevilla, Departamento de Ingeniería de Sistemas y Automática, Seville, Spain. \emailAddress{tdelcarro@us.es}}
\addAuthor{Gerson Portilla}{ 
Researcher, Escuela Tecnica Superior de Ingeniería, Universidad de Sevilla, Departamento de Ingeniería de Sistemas y Automática, Seville, Spain. \emailAddress{gportilla@us.es}}
\addAuthor{Alexandre Seuret}{ 
Researcher, Escuela Tecnica Superior de Ingeniería, Universidad de Sevilla, Departamento de Ingeniería de Sistemas y Automática, Seville, Spain. \emailAddress{aseuret@us.es}}
\addAuthor{Rafael Vazquez}{Full Professor, Escuela Tecnica Superior de Ingeniería, Universidad de Sevilla, Departamento de Ingeniería Aeroespacial, Seville, Spain. \emailAddress{rvazquez1@us.es}}
\end{authorList}
\justifying

\begin{abstract}
This paper presents the design of a state-feedback control law for spacecraft rendezvous, formulated using the Hill–Clohessy–Wiltshire (HCW) equations. The proposed method introduces an impulsive control strategy to regulate thruster operations. Specifically, a state-dependent switching framework is developed to determine both the control input magnitudes and the precise state conditions that trigger thruster activation. The nonlinear control law is derived using principles from automatic control theory, particularly Lyapunov stability analysis and the Linear Matrix Inequality (LMI) framework. The resulting closed-loop system is proven to be stable, while simultaneously minimizing the total number of actuation events. The effectiveness of the proposed method is demonstrated through a numerical case study, which includes a comparative analysis with a standard Model Predictive Control (MPC) scheme, highlighting the advantages and trade-offs of the developed control structure.

\end{abstract}

\keywords{Spacecraft Rendezvous, Event-trigger control, Linear Matrix Inequalities, Lyapunov
control theory, Guidance, navigation and control of space vehicles, Autonomous systems}

\section*{Nomenclature}

{\renewcommand\arraystretch{1.0}
\noindent\begin{longtable*}{@{}l @{\quad=\quad} l@{}}
$r, R$  & position\\
$v, u$ & velocity\\
$\mu$ & Earth gravitational parameter\\
$n$ & mean motion\\
\end{longtable*}}
\newpage 
\section{Introduction}

Nowadays, spacecraft rendezvous has become a critical application in a wide set of meaningful space missions involving proximity operations such as on-orbit assembly, exploration of small bodies, and space debris removal. In the context of lunar missions, an increasingly urgent area of interest, rendezvous has long played a vital role. It was crucial to the development of the Apollo program \citep{NEUFELD2008540} and now serves as a cornerstone of NASA Artemis initiative \citep{doi:10.2514/6.2020-1921}. A central element of this effort is the Lunar Orbital Platform-Gateway \citep{doi:10.2514/6.2018-5337}, a cislunar station designed as a communication and scientific hub.\\
\indent Extensive literature exists on spacecraft rendezvous. The foundation of relative motion analysis lies in the Hill–Clohessy–Wiltshire (HCW) equations \citep{ee197d15-81b3-3df1-9469-8da0b5107e60}, \citep{doi:10.2514/8.8704}, which model close-proximity motion in circular orbits and admit analytical solutions.\\
\indent Efficient control of relative motion remains challenging, requiring precise, fuel-efficient maneuvers under tight operational constraints. Developments in automatic control theory have facilitated the systematic optimization of trajectories while guaranteeing closed-loop stability and desired performance. Increasing degrees of onboard autonomy have become indispensable, as reliance on ground-based command is rendered impractical by inherent communication delays \citep{Fehse_2003} and the escalating density of active satellites in Low Earth Orbits (LEO). Autonomous control architectures enable prompt and adaptive responses to dynamic mission conditions, thereby enhancing the safety, robustness, and overall efficiency of rendezvous and proximity operations.\\
\indent The wide range of possible mission constraints has motivated extensive use of control techniques such as Model Predictive Control (MPC) \citep{https://doi.org/10.1002/rnc.2827}, \citep{HARTLEY2012695}, \citep{7012053}. MPC operates by solving multiple optimization problems over a finite prediction horizon, in order to minimize a cost function, typically related to fuel consumption (see \citep{MAYNE2000789} for foundational concepts). Several enhancements have addressed uncertainties and complex scenarios, including tumbling-target rendezvous with collision avoidance \citep{LI2017700}, \citep{DONG2022176} and chance-constrained formulations for LEO \citep{GAVILAN2012111} and Earth-Moon L2 Near-Rectilinear Halo Orbit operations \citep{SANCHEZ2020105827}. Continuous-time constraint enforcement via polynomial non-negativity was explored by \citep{6314659}, \citep{doi:10.2514/1.G000283}, and \citep{GILZ20177229}, and later extended to a six-degree-of-freedom rendezvous by exploiting the attitude flatness property \citep{SANCHEZ2020391}. Despite its versatility, MPC faces challenges related to stability guarantees, the proper handling of minimum impulse bit constraints, and the high computational cost of real-time optimization. These are limitations that are particularly critical for resource-constrained platforms such as CubeSats.\\
\indent To address these limitations, we adopt an alternative strategy based on the \textit{event-triggered control paradigm} (ETC) \citep{Astrom2008}, \citep{6425820}, which applies control inputs only when state-dependent conditions are met. This approach minimizes unnecessary actuation and reduces computational and energy demands. Our method builds on the HCW model for circular orbits, focusing on close-range rendezvous phases, exploiting \textit{Lyapunov-based stability analysis and design} for a class of \textit{Hybrid Dynamical Systems} \citep{693dbdb4-c870-3adb-a23e-a58722bf715d}, naturally suited to systems combining continuous evolution with discrete events, such as those caused by instantaneous control inputs or abrupt state transitions.\\
\indent Recent research in the hybrid control community has addressed spacecraft applications: \citep{8331888} treated elliptical rendezvous via Floquet–Lyapunov transformations; \citep{seuret2024hybriddynamicalapproachimpulsive} employed the HCW model and demonstrated stability separately for the in-plane and out-of-plane dynamics using Lyapunov methods; \citep{Sanchez_2021} proposed event-based predictive control for rendezvous hovering phases, where triggering conditions are defined by the admissible set reachability; \citep{YANG2015454} implemented an event-based control scheme based on a switching logic, distinguishing between open-loop and closed-loop dynamics governed by periodically applied impulses of finite duration; \citep{7798769} presented robust hybrid supervisory control to coordinate phase-specific controllers designed for different stages of the rendezvous and docking mission and \citep{CRANE201894} validated these results in a real flight software system tested inside a software-in-the-loop environment; \citep{XIE2025918} adopted event-trigger impulsive control for the station-keeping of a near-asteroid spacecraft, illustrating its capability to reduce the energy cost and extend the spacecraft’s lifetime; \citep{ZHANG20182620} and \citep{WU2018927} recently exploited event-based strategies to deal with attitude control applications.\\
\indent In this work, the triggering mechanism is governed by a tuning parameter, which selects both the control gains and the regions of the state space where thrusters need to be activated. This approach enables the control inputs of the chaser to be applied only when necessary and not at all sampling times, resulting in asymptotic convergence to the target position with reduced fuel consumption. Under chemical propulsion, maneuvers are often modeled as instantaneous velocity variations. However, physical limitations impose bounds on the maximum thrust delivered by each actuator. Therefore, the control law, derived via Lyapunov theory, explicitly accounts for input saturation \citep{tarbouriech2011stability, SEURET2024101045}, as well as process noise \citep{10468557}. A key advantage is that the underlying feasibility problem, expressed in the form of Linear Matrix Inequalities (LMIs), is solved offline only once, thus minimizing onboard computation. This hybrid, event-triggered framework provides a robust, low-complexity alternative for autonomous rendezvous and opens perspectives for future developments, such as adaptation to elliptical orbits. The proposed strategy can also be viewed as applicable to hovering or formation-keeping scenarios (see \citep{Sanchez_2021}), where the objective is to remain near a point at a given distance from the target spacecraft. In such cases, safety requirements are considerably relaxed, making our formulation well-suited for this class of problems.\\
\indent This paper is structured as follows. Section \ref{sec:problem_statement} introduces the Hill-Clohessy-Wiltshire relative dynamics with impulsive inputs and states the control objectives. Section \ref{sec:ETC} presents the proposed event-triggered control strategy based on a switched system framework. Section 
\ref{sec:section_stabilization} provides the main result of this paper, in the form of a stabilization theorem for the closed-loop dynamics under the selected control law. Section \ref{sec:simulations} presents numerical applications illustrating the effectiveness of the control law in various close-range rendezvous scenarios and provides comparisons with more standard techniques, such as MPC. Finally, Section \ref{sec:conclusions} concludes the paper, providing remarks and discussing potential directions for future research.\\
\indent\textbf{Notations:} Throughout the paper, $\mathbb R^{n\times m}$ denotes the set of all real $n \times  m$ matrices, $\mathbb S^{n}$ ($\mathbb D^n$) the set of symmetric (diagonal) matrices in $\mathbb R^{n\times n}$. Matrices $I$ and ${0}$  denote the identity and null matrices of the appropriate dimensions, respectively. For any matrices $A=A^{\!\top},\ B,\ C=C^{\!\top}$ of appropriate dimensions, matrix $\left[\begin{smallmatrix}A&B\\\ast & C  \end{smallmatrix} \right]$ denotes the symmetric matrix $\left[\begin{smallmatrix}A&B\\ B^{\!\top}& C  \end{smallmatrix} \right]$. 
For any matrix $M\in\mathbb R^{n\times n}$, the notation  $M\succ0$ means that $M\in\mathbb S^{n}_+$. 
The notation $(Z)_\ell$ for any matrix $Z\in\mathbb R$ indicates the $\ell$-th row of $Z$. Finally, for any matrix $A$, the Hermitian component of such a matrix is defined as $\text{He}(A)=A+A^{\top}$.

\section{Problem statement}\label{sec:problem_statement}

\subsection{HCW model with saturated impulsive actuation}
A variety of mathematical models can be employed for spacecraft rendezvous. Specifically, when the target spacecraft is situated in a circular Keplerian orbit and the approaching vehicle, referred to as the chaser, is in close proximity, the linear HCW equations, introduced in \citep{ee197d15-81b3-3df1-9469-8da0b5107e60} and \citep{doi:10.2514/8.8704}, offer an accurate representation of the relative positions and velocities of the spacecraft.
To establish the spatial configuration of the spacecraft rendezvous, the relative motion between the chaser and the target is illustrated in Figure~\ref{fig0}. The $XYZ$ axes define the \textit{Geocentric Equatorial Frame}, centered at the center of mass of the Earth. The local frame of the target, also denoted as \textit{Local Vertical-Local Horizontal} (LVLH) frame, is centered in point A. Its $x$ axis coincides with the outward radial from the Earth to the satellite itself. The $z$ axis is perpendicular to the orbital plane, and the $y$ axis, which is tangential to the target orbit only when it is circular, completes the reference frame. In Figure~\ref{fig1}, vectors connecting the Earth to the target (point A) and the chaser (point B) are denoted as $R$ and $R_0$, respectively. The vector $r=R-R_0$ connecting A to B coincides with the relative position of the chaser in the LVLH frame.
\begin{figure}[htbp]
    \centering
    \begin{minipage}[b]{0.30\textwidth}
        \centering
        \includegraphics[width=\textwidth]{./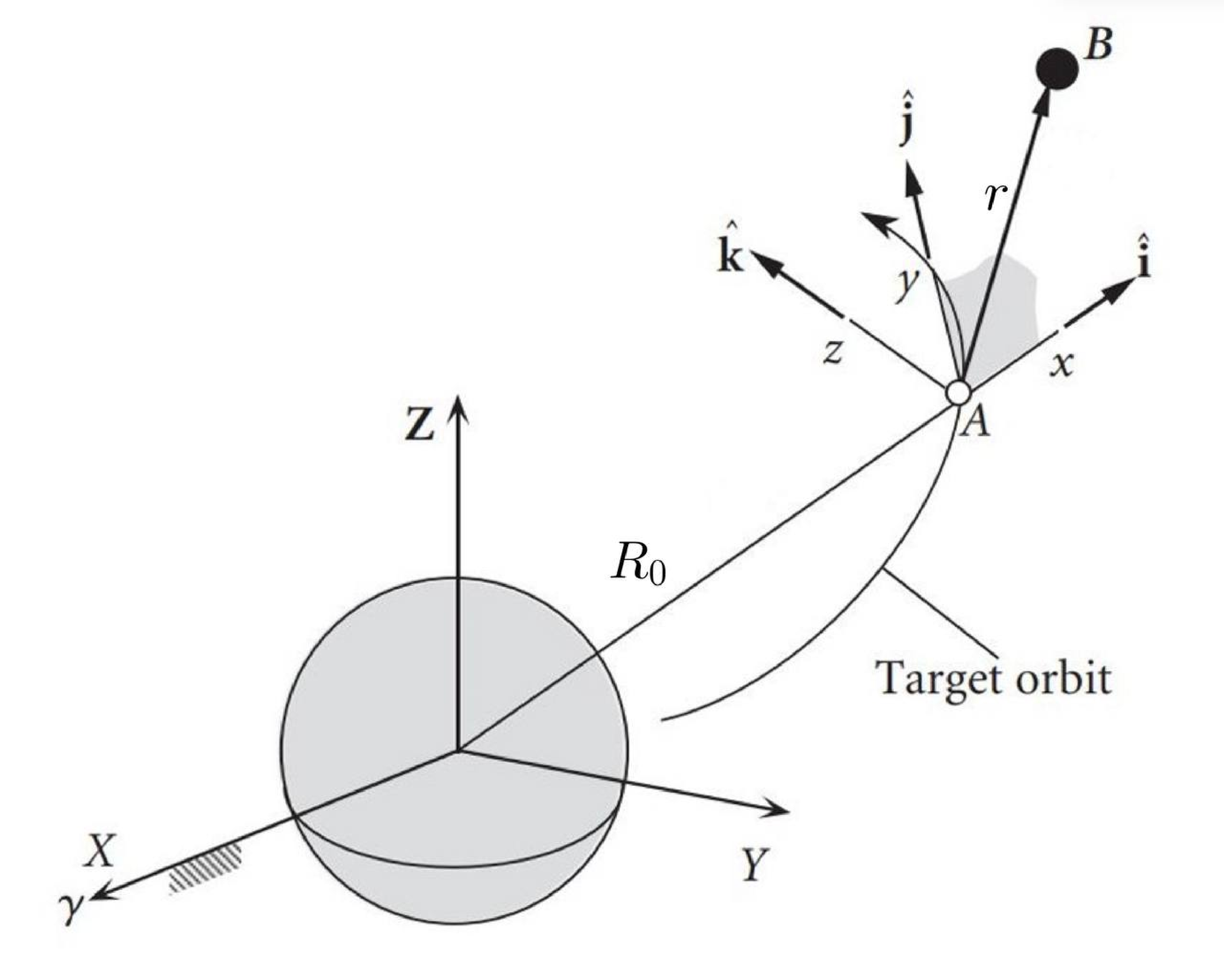}
        \caption{Inertial and LVLH reference frames. Adapted from \cite{CURTIS2014367}}
        \label{fig0}
    \end{minipage}
\hspace{0.15\textwidth}
    \begin{minipage}[b]{0.27\textwidth}
        \centering
        \includegraphics[width=\textwidth]{./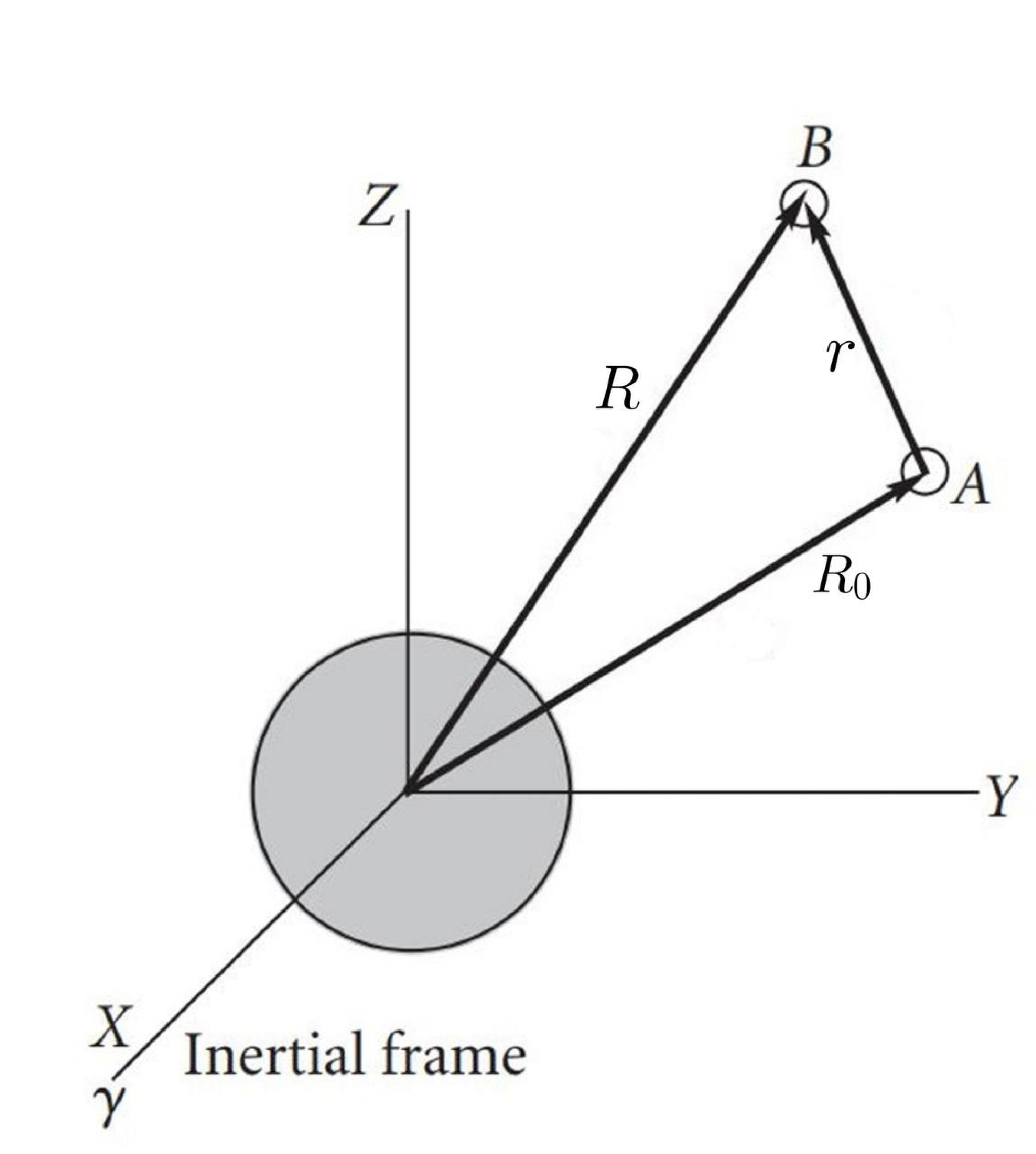}
        \caption{Chaser and target position in the LVLH frame. Adapted from \cite{CURTIS2014367}}
        \label{fig1}
    \end{minipage}
\end{figure}
%
The derivation of the HCW equations starts from the study of the two-body problem (2BP) of both spacecraft, that in the case of the chaser is expressed as
\begin{equation}\label{2bp}
    \Ddot{R} = -\mu\frac{R}{\| R\|^3},
\end{equation}
where $\mu$ is the standard gravitational parameter of the central body, for the Earth $\mu= 398600.4 \frac{km^3}{s^2}$.\\ 
\indent The HCW model is based on two main hypotheses: first, to consider a circular target orbit. Second, to assume that the distance between the chaser and the target is significantly small in comparison to their distance from the Earth, i.e., \(\frac{\|r\|}{\|R_0\|} \ll 1\). Thus, the resulting set of HCW equations is given by:
\begin{equation}
\label{eq:HCW}
\left\{
\begin{aligned}
     \ddot{r}_x\!-\!2n \dot{r}_y\!-\!3n^2 r_x&\!=\!w_x,\\
     \ddot{r}_y\!+\!2n\dot{r}_{x}&\!=\!w_y,\\
    \ddot{r}_z\!+\!n^2r_z&\!=\!w_z,
\end{aligned}
\right. 
\end{equation}
where $r = [\,r_x \; r_y \; r_z\,]^{\top}$ and $
\dot{r} = v = [\,v_x \; v_y \; v_z\,]^{\top}$ denote the relative positions and velocities between the chaser and the target in the target reference frame, respectively. Dynamics \eqref{eq:HCW} includes a disturbance vector $w = [\,w_x \; w_y \; w_z\,]^{\top}$, aiming to represent the unmodeled dynamics (due to linearization) or additive external perturbations. The assumptions on this vector will be provided later on. Here, $n$ stands for the so-called mean motion, representing the mean (instantaneous for circular orbits) angular velocity of an orbiting spacecraft around the central body. For a circular orbit of radius $\|R_0\|$ we have $n=(\mu/\|R_0\|^3)^{\frac{1}{2}}$. Thereby, for a typical Low Earth Orbit of radius $\|R_0\|$ = 500 km, we would have $n$ = 0.0011 rad/s. It is worth mentioning that the $r_z$ motion, which represents the out-of-plane dynamics, is decoupled with respect to the in-plane one, $r_x-r_y$. The drift in the $r_y$ direction, which is slow in nature, characterizes a weakly unstable in-plane dynamics.\\
\indent Adopting the choice of chemical thrusters for the chaser propulsion, it is common practice to approximate the inputs as velocity impulses. According to this model, the thrust is able to alter instantaneously the velocity states of the chaser, while the positions are assumed to be unaffected. More precisely, the velocities follow the difference equations
\begin{equation}
\label{eq:HCW_imp_control}
\left\{
\begin{aligned}
     {v}_x^+=v_x+\sat(u_x),\\
     {v}_y^+=v_y+\sat(u_y),\\
     {v}_z^+=v_z+\sat(u_z),\\
\end{aligned}
\right. 
\end{equation}
where the $v_x^+$, $v_y^+$ and $v_z^+$ stand for the value of the velocities $v_x$, $v_y$ and $v_z$ immediately after an impulse. In this equation, $u_x$, $u_y$ and $u_z$ are the control inputs to be designed. To satisfy practical constraints, each control input is subject to saturation, limiting the maximum thrust that each actuator can produce.  Each thruster is assumed to take values only within the bounded interval $[-\bar u, \bar u]$, where $\bar u>0$ is a known constant common to all thrusters, resulting in a saturated impulsive control input. Finally, the body axes of the vehicle are assumed to be always aligned to the LVLH frame. Therefore, the velocity saturation can be directly expressed in this frame. The saturation function is given by:
    \begin{equation}
        \sat(u):= \begin{bmatrix}
            \sat(u_x)\\
            \sat(u_y)\\
            \sat(u_z)\\
        \end{bmatrix}=\begin{bmatrix}
            \min(\bar u, |u_x|)\cdot\sign(u_x)\\
            \min(\bar u, |u_y|)\cdot\sign(u_y)\\
            \min(\bar u, |u_z|)\cdot\sign(u_z)\\
        \end{bmatrix}.
    \end{equation}

\subsection{Formulation of the rendezvous control problem}

Considering that the target occupies the center of the LVLH frame, the control objective of guiding the chaser to the target position translates into the stabilization to the origin. This goal has to be achieved while accounting for input saturation and process noise, whose presence represents an element of complexity in the stability properties of the closed-loop system. With the crucial goal of reducing control energy consumption, our objective is to include a so-called \textit{event-triggered control} \cite{6425820}. The main advantage of this approach is found in the fact that, depending on the system state, the control is able, through the incorporation of logic decision variables, to choose whether to generate or not an impulse at a certain instant. The proposed event-triggered control strategy has the following rationale:
\begin{itemize}
\item \textit{How much?}  The control law decides the amplitude of the input to be implemented at a given instant. 
\item \textit{When?} The control law decides, depending on the current state, whether or not a control actuation is required at the considered time.
\end{itemize}

This strategy offers two main advantages. First, it introduces nonlinear behavior into the closed-loop dynamics, thereby overcoming the inherent limitations of linear control. Second, it can potentially reduce the total number of actuation events compared to Model Predictive Control (MPC) \citep{GAVILAN2012111, DONG2022176}, where a control input is computed and applied at each sampling time within the considered time horizon.\\
\indent In the next section, we present the structure of the nonlinear event-triggered control law, which results in a state-dependent switching model.
 






\section{State-dependent switching model}\label{sec:ETC}

Before exploring the design of our control law, let us first define the state vector of the system $x=[r_x,r_y,r_z,v_x,v_y,v_z]^{\! \top}\!$, that collects the relative positions and velocities of the chaser with respect to the target. Then, let us introduce the discretized version of the dynamics of system \eqref{eq:HCW}-\eqref{eq:HCW_imp_control}, obtained by a simple integration over a constant sampling period $T>0$, which yields
\begin{equation}\label{def:DSsys0}
     x^+=A x+B\text{sat}(u)+B_w w,
\end{equation}
where $A = e^{A_c T}$, $B = e^{A_c T}B_c$ and $B_w=\int_0^Te^{A_c (T-\tau)}\mathrm{d}\tau B_{w_c}$. Matrices $A_c$, $B_c$ and $B_{w_c}$ are respectively the state, input and noise input matrices of the HCW continuous system:
\begin{equation}\label{hybrid_model}
    A_c=\begin{bsmallmatrix}
    0 &0 &0 &1 &0&0\\
    0 &0 &0 &0 &1&0\\
    0 &0 &0  &0 &0&1\\
    3n^2 &0 &0  &0 &2n &0\\
    0 &0 &0  &-2n &0&0\\
    0 &0 &-n^2  &0 &0&0\\
\end{bsmallmatrix},\quad B_c=\begin{bsmallmatrix}
    0 &0 &0 \\
    0 &0 &0 \\
    0 &0 &0 \\
1 &0&0\\
0 &1&0\\
0 &0&1\\
\end{bsmallmatrix},\quad B_{w_c}=\begin{bsmallmatrix}
    0 &0 &0 &0 &0 &0 \\
    0 &0 &0 &0 &0 &0 \\
    0 &0 &0 &0 &0 &0 \\
0 &0&0&1&0&0\\
0 &0&0&0&1&0\\
0 &0&0&0&0&1\\
\end{bsmallmatrix}.
\end{equation}

\indent The continuous disturbance term $w$ in \eqref{eq:HCW} is assumed to vary slowly over time compared to the dynamics of the system. Hence, we reasonably can assume it to  be constant over every sampling period $T$. 
Moreover, $w$ is modeled as a norm-bounded white noise process. This assumption is formally expressed by: 
\begin{assumption}\label{th:bounded_noise}
There exists $\lambda > 0$ such that the norm of the disturbance verifies $w^{\top}w < \lambda$, at all times. In other words, the disturbance belongs to the following set $\Omega_{\lambda}$
\begin{equation}
    \Omega_{\lambda}  :=\{w\in\mathbb R^{n_x},\quad \mbox{ s.t.}\quad  w^\top w\leq \lambda \}\nonumber.\\
\end{equation}
\end{assumption}
\indent The switching nature of our chosen logic lies in the specific structure of the control law, which is defined as follows:
\begin{equation}\label{def:u}
    u(x)={\sigma}(x) Kx,
\end{equation}
where \begin{itemize}
\item $Kx$ infers the amplitude of the control input to be implemented , answering to question \textit{``how much?"}. Therefore, $K\in\mathbb R^{n_u\times n_{x}}$ represents a classical state-feedback control gain.

    \item $\sigma(x)$ is a logical variable that can only assume values in $\{0,1\}$. Its rationale is to specify whether the control action is performed ($\sigma(x)=1$) or not ($\sigma(x)=0$). Following previous works on event-triggered control \cite{Astrom2008} and on switched systems \cite{6425820}, we introduce the switching logic $\sigma(x) \in \{0,1\}$ as follows
\begin{equation}
\label{def:trig_indexed}
\sigma(x):=
\begin{cases}
0, & \text{if }\  x^\top\! M x \geq 0, \\
1, & \text{if }\  x^\top\! M x \leq 0,
\end{cases}
\end{equation}
where $M=M^\top$ is a symmetric matrix that allows to answer to the question \textit{``when?"}.

\end{itemize}  

\begin{remark}
In the case where $x^\top M x=0$, any solution $\sigma(x)=1$ or $\sigma(x)=0$ will be acceptable.    
\end{remark}

\begin{remark}
Consider the case where $M$ is a positive (respectively, negative) semi-definite matrix. 
In this situation, the set of points $x$ for which $\sigma(x) = 1$ (respectively, $\sigma(x) = 0$) 
reduces to the origin. Consequently, the control law would demand an impulse (respectively, no impulse) 
for any nonzero state. Therefore, to exploit the event-triggered nature of the proposed scheme, 
the matrix $M$ must possess eigenvalues of differing signs.
\end{remark}

Altogether, this control law can be interpreted as an event-triggered strategy where the control actuation combines a decision on both the amplitude and the time instants. Including this new feature, the closed-loop system is now governed by the following nonlinear and switched dynamics at the sampling times:
\begin{equation}\label{def:DSsys}
     x^+=Ax +B\sat(\sigma(x)Kx)+B_ww,
\end{equation}
where we recall that the state dependent function $\sigma(x)$ is given in \eqref{def:trig_indexed}. The main issue addressed in the next section concerns the co-design of the triggering matrix $M$ and the feedback gain $K$.

\section{Event-triggered control design}\label{sec:section_stabilization}

\subsection{Preliminary lemmas}

Before presenting the main result of this paper, useful preliminary results are introduced to facilitate subsequent development. The first one deals with a specific property of the triggering law, while the second refers to the generalized sector condition, an useful lemma to treat saturated systems. 
\begin{lemma}\label{lem:M2}
    For any $x$ in $\mathbb R^{n_x}$, the following inequality holds
    \begin{equation}\label{lem:condM_indexed}
\mathcal{L}_2(x) :=  x^\top \mathcal{M}_{\sigma(x)} x \geq 0.
\end{equation}
where
\begin{equation*}
    \mathcal{M}_{\sigma(x)}:= (1 - 2\sigma(x)) M, \quad \forall x\in\mathbb R^{n_x}.
\end{equation*}
\end{lemma}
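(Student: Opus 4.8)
The statement is established by a direct case analysis on the two possible values of the logical variable $\sigma(x) \in \{0,1\}$ given in \eqref{def:trig_indexed}. First I would treat the case $\sigma(x) = 0$. By definition of the switching logic, this value is selected precisely when $x^\top M x \geq 0$. Since $\mathcal{M}_{\sigma(x)} = (1 - 2\cdot 0)M = M$ in this case, we obtain directly $\mathcal{L}_2(x) = x^\top M x \geq 0$.

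Next I would treat the case $\sigma(x) = 1$. Again by \eqref{def:trig_indexed}, this value corresponds to $x^\top M x \leq 0$. Here $\mathcal{M}_{\sigma(x)} = (1 - 2\cdot 1)M = -M$, so that $\mathcal{L}_2(x) = x^\top(-M)x = -\,x^\top M x \geq 0$, where the last inequality follows from $x^\top M x \leq 0$. The boundary situation $x^\top M x = 0$ is consistent with either choice of $\sigma(x)$ (cf.\ the preceding remark) and yields $\mathcal{L}_2(x) = 0 \geq 0$ in both. Since the two cases exhaust all possibilities for $\sigma(x)$, the inequality $\mathcal{L}_2(x) \geq 0$ holds for every $x \in \mathbb{R}^{n_x}$.

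There is no genuine obstacle here: the result is essentially a restatement of the definition of $\sigma(x)$ in a single unified quadratic form, obtained by absorbing the sign flip $M \mapsto -M$ into the factor $(1 - 2\sigma(x))$. Its only purpose is to provide a single S-procedure-type inequality, valid on the whole state space rather than piecewise, that can later be appended (with a multiplier) to the Lyapunov difference along \eqref{def:DSsys} when deriving the LMI conditions for co-design of $M$ and $K$.
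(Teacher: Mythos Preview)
Your proof is correct and matches the paper's own argument essentially verbatim: both proceed by a two-case analysis on $\sigma(x)\in\{0,1\}$, noting that in each case the sign of $1-2\sigma(x)$ agrees with the sign of $x^\top M x$ by definition of the switching rule \eqref{def:trig_indexed}. Your additional remark on the boundary case $x^\top M x=0$ and on the role of the lemma as an S-procedure device is accurate but goes slightly beyond what the paper writes.
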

\begin{proof}
The proof is based on the observation that
\begin{itemize}
    \item $\sigma(x) =0$ implies $x^{\! \top}  M x \geq 0$ and $1-2\sigma(x)=1\geq0$;
    \item $\sigma(x) =1$ implies $x^{\! \top}  M x \leq 0$ and $1-2\sigma(x)=-1\leq 0.$
\end{itemize} 

Therefore, we get $(1-2\sigma(x))x^\top Mx\geq0$, which concludes the proof.
\end{proof}

\indent It is well-known that the introduction of input saturation constraints requires special care in the study of the stability of a system (see \citep{tarbouriech2011stability}). Following the generalized sector condition introduced in \cite{tarbouriech2011stability}, let us define the so-called dead-zone function as $\phi(u)=\text{sat}(u)-u$. Next, we introduce a key lemma for stabilizing system~\eqref{def:DSsys0}, considering the dead-zone function.  
\begin{lemma}{\cite{tarbouriech2011stability}}\label{lem:M1}
For any matrix $G\in\mathbb R^{n_u\times n_{x}}$, define the set $\mathcal{S}(G) = \left\{ x \in \mathbb{R}^{n_{x}} : |(G)_{\ell} x| \leq \bar{u}, \ \forall \ell = 1, ..., n_u \right\}$. Then, the following relation holds for any matrix $T\in\mathbb D^{n_u}_{+}$ 
\\\begin{equation}\label{eq:ineq_sat}
\mathcal{L}_1(x) :=\phi(u)^\top T \left[ \text{sat}(u) + Gx \right] \leq 0, \quad \forall x \in \mathcal{S}(G).
\end{equation}
\end{lemma}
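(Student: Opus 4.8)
The plan is to exploit the diagonal structure of $T$ to split the quadratic form into $n_u$ scalar contributions, one per actuator channel, and then establish each scalar inequality by a three-way case distinction on the corresponding entry of $u$.

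First I would write $T=\text{diag}(t_1,\dots,t_{n_u})$ with $t_\ell>0$ for every $\ell$, and expand
\begin{equation*}
\mathcal{L}_1(x)=\sum_{\ell=1}^{n_u} t_\ell\,\phi(u)_\ell\,\big[\text{sat}(u)_\ell+(G)_\ell x\big],
\end{equation*}
where $\phi(u)_\ell=\text{sat}(u)_\ell-u_\ell$ is the $\ell$-th component of the dead-zone vector. Because each $t_\ell$ is strictly positive, it suffices to show that, for every $\ell$ and every $x\in\mathcal{S}(G)$, the scalar product $\phi(u)_\ell\,[\text{sat}(u)_\ell+(G)_\ell x]$ is non-positive; summing non-positive terms then yields $\mathcal{L}_1(x)\le 0$.

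Next I would fix an index $\ell$ and distinguish three cases according to the value of $u_\ell$, using the componentwise definition of the saturation function:
\begin{itemize}
\item If $|u_\ell|\le\bar u$, then $\text{sat}(u)_\ell=u_\ell$, so $\phi(u)_\ell=0$ and the product vanishes.
\item If $u_\ell>\bar u$, then $\text{sat}(u)_\ell=\bar u$ and $\phi(u)_\ell=\bar u-u_\ell<0$; since $x\in\mathcal{S}(G)$ forces $(G)_\ell x\ge-\bar u$, we get $\text{sat}(u)_\ell+(G)_\ell x=\bar u+(G)_\ell x\ge0$, so the product is non-positive.
\item If $u_\ell<-\bar u$, then $\text{sat}(u)_\ell=-\bar u$ and $\phi(u)_\ell=-\bar u-u_\ell>0$; since $x\in\mathcal{S}(G)$ forces $(G)_\ell x\le\bar u$, we get $\text{sat}(u)_\ell+(G)_\ell x=-\bar u+(G)_\ell x\le0$, so the product is again non-positive.
\end{itemize}
In all three cases the scalar term is $\le 0$, which is exactly what is needed.

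I do not expect a genuine obstacle: the statement becomes elementary once one passes to components. The only point requiring care is the sign bookkeeping in the two saturated cases, where the membership $x\in\mathcal{S}(G)$ is precisely the hypothesis that forces $\text{sat}(u)_\ell+(G)_\ell x$ to carry the sign opposite to $\phi(u)_\ell$. It is also worth emphasizing that the diagonal (rather than merely symmetric positive definite) assumption on $T$ is what legitimizes the channel-by-channel reduction, as it eliminates all cross terms $\phi(u)_\ell\,[\,\cdot\,]_k$ with $k\neq\ell$.
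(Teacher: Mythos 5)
Your proof is correct. The paper does not actually prove this lemma---it imports it verbatim from \cite{tarbouriech2011stability}---and your argument (using the diagonal structure of $T$ to reduce to $n_u$ scalar inequalities, then a three-way case split on whether each channel is unsaturated, saturated from above, or saturated from below) is precisely the standard proof of the generalized sector condition given in that reference. The sign bookkeeping in the two saturated cases is the only delicate point and you have it right: $x\in\mathcal{S}(G)$ gives $-\bar u\le (G)_\ell x\le\bar u$, which forces $\sat(u)_\ell+(G)_\ell x$ to carry the sign opposite to $\phi(u)_\ell$.
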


\subsection{Main result} 
We are now in position to state the main result of this paper:
\begin{theorem}\label{th:Stabb}
Let Assumption~\ref{th:bounded_noise} hold. For a given parameter $\mu \in (0,1)$, assume that there exist decisions variables $\mathcal D$ as follows
\begin{align*}
    \mathcal{D}&:=\{\varepsilon,W_i,{Q}, S, Y, Z_i,H\}_{i\in\{0,1\}}\in \mathbb H^{n_x}:=\mathbb{R}_{>1}\times(\mathbb{S}^{n_x})^3\times\mathbb{D}_{+}^{n_x} \times(\mathbb{R}^{n_x\times n_u})^3\times\mathbb{R}^{n_x\times n_x}
\end{align*}
such that, for all $\ell=1,\ldots,n_u$, and for any combination of $(i,j)$ in $\{0,1\}^2$, the following LMIs hold
\begin{equation}\label{eq:LMI_conditions1}
\begin{aligned}
   \Phi_{i,j}(A, B)=\begin{bmatrix}
             \mathrm {He}(H)-W_{j} -\frac{\lambda\varepsilon}{\mu}B_wB_w^\top  &-AH^\top -i B Y & -i B S^\top \\
        \ast &(1-\mu)W_{i} \!-\!\mathcal{Q}_{i} & {i}(Y+Z)^\top \\ \ast & \ast & 2iS
    \end{bmatrix}\succeq0,
\end{aligned}
\end{equation}
\begin{equation}\label{eq:LMI_conditions2}
\begin{aligned}
 \Psi_{i\ell }=\begin{bmatrix} W_{i} - \mathcal{Q}_{i} & (Z_{i})^{\top}_\ell \\\ast &\bar{u}^2\end{bmatrix}\succeq 0,
\end{aligned}
\end{equation}
where we called $\mathcal Q_{i}=(1 - 2i) Q$. 

Then, the switched control law \eqref{def:u} and the event-triggered rule \eqref{def:trig_indexed} with $K=Y{H^{-1}}^{\top}$ and $M= H^{-1} Q {H^{-1}}^{\top}$, respectively, allow satisfying, for $P_i=H^{-1}W_i{H^{-1}}^{\top}$, the following statements:
\begin{enumerate}[label=C\arabic*),ref=C\arabic*]
    \item \label{obj:basin} When $w=0$, $\mathcal A_1 := \left\{x\in\mathbb R^{n_x}~|~x^{\! \top}\!  P_{\sigma(x)} x\leq 1\right\},$ is an approximation of the basin of attraction of the origin for the closed-loop system \eqref{def:DSsys0}-\eqref{def:u}.
    \item \label{obj:attractor} When $w\in \Omega_\lambda$ and $w\neq0$, the solutions to the closed-loop system \eqref{def:DSsys0}-\eqref{def:u} converge to the attractor $\mathcal A_\varepsilon := \left\{x\in\mathbb R^{n_x}~|~x^{\! \top}\!  P_{\sigma(x)} x\leq \varepsilon^{-1}\right\}$, initialized in $\mathcal A_1\setminus\mathcal A_\varepsilon$.
\end{enumerate}
\end{theorem}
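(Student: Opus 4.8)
The natural route is a Lyapunov argument on the switched closed-loop map, using the candidate $V(x) = x^\top P_{\sigma(x)} x$ with $P_i = H^{-1} W_i H^{-1\top}$. First I would rewrite the dynamics \eqref{def:DSsys} in terms of the dead-zone $\phi(u) = \sat(u) - u$, so that $x^+ = Ax + B(\sigma(x) Kx + \phi(\sigma(x)Kx)) + B_w w$, and I would target the one-step decrease condition $V(x^+) - (1-\mu)V(x) - \mu\varepsilon^{-1}\lambda^{-1} w^\top w \le 0$ (the last term encoding how the noise enters). The plan is to write this inequality as a quadratic form in the augmented vector $\zeta = (x^+{}^\top, x^\top, \phi^\top)^\top$ (or rather an equivalent one where $x^+$ is treated as a free variable linked by an equality constraint, which is the standard trick for getting an affine dependence on $A,B$), and then inject two slack terms that are guaranteed nonnegative: the sector term $\mathcal L_1(x) \le 0$ from Lemma~\ref{lem:M1} (with $G = ZH^{-1\top}$, so $S$ plays the role of $T$ and $Z$ of $TG$) and the triggering term $\mathcal L_2(x) \ge 0$ from Lemma~\ref{lem:M2} (with $M = H^{-1}QH^{-1\top}$, so $\mathcal Q_i = H M_i H^\top$ up to sign). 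Adding $-\mathcal L_1$ and a multiple of $\mathcal L_2$ to $V(x^+)-(1-\mu)V(x)$ only makes the bound tighter, so it suffices that the resulting quadratic form be negative semidefinite.

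The second main step is the congruence transformation that turns this $P_i$-dependent, nonconvex inequality into the LMIs \eqref{eq:LMI_conditions1}. Here I would use the descriptor/Finsler-type manipulation: introduce the slack matrix $H$, use the identity $0 = \mathrm{He}\big( (\text{something})^\top H^{-1} (x^+ - Ax - B(\dots)) \big)$, or equivalently bound $P_i$ from the relation $\mathrm{He}(H) - W_j \succeq H^\top W_j^{-1} H$ type arguments — more precisely, multiply the whole quadratic form by $\mathrm{diag}(H, H, S)$ on left and right (with $H$ invertible, which is forced because $\mathrm{He}(H) - W_j \succ 0$ implies $H + H^\top \succ 0$ hence $H$ nonsingular), and substitute $Y = KH^\top$, $W_i = H P_i H^\top$, $Q = H M H^\top$, $Z = GH^\top$, $S = T$. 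The block $\mathrm{He}(H) - W_j$ in position $(1,1)$ comes from the standard $-H^\top P_j^{-1} H \preceq W_j - \mathrm{He}(H)$ relaxation (valid since $(H-W_j)^\top W_j^{-1}(H-W_j)\succeq 0$ gives $H^\top W_j^{-1} H \succeq \mathrm{He}(H) - W_j$), applied after writing $V(x^+) = x^{+\top} P_j x^+$ with $j = \sigma(x^+)$. Since $\sigma(x^+)$ is not known in advance, the condition must hold for both $j = 0$ and $j = 1$; likewise $\sigma(x)$ gives the index $i$, so all four combinations $(i,j) \in \{0,1\}^2$ appear, which is exactly the quantifier in the theorem. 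The $\frac{\lambda\varepsilon}{\mu} B_w B_w^\top$ term is produced by a Schur complement that eliminates $w$ after using Assumption~\ref{th:bounded_noise} to replace $w^\top w$ by $\lambda$.

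For the conclusions C1 and C2: once the one-step inequality $V(x^+) \le (1-\mu)V(x) + \frac{\mu}{\varepsilon\lambda} w^\top w$ is established on the relevant sublevel set, standard comparison/invariance arguments finish it. When $w = 0$, $V(x^+) \le (1-\mu)V(x)$ gives geometric decrease, so $\mathcal A_1 = \{V \le 1\}$ is forward invariant and every trajectory starting in it converges to the origin — hence it is an inner estimate of the basin of attraction. When $w \in \Omega_\lambda$, using $w^\top w \le \lambda$ yields $V(x^+) \le (1-\mu)V(x) + \mu\varepsilon^{-1}$, so $V(x) \le \varepsilon^{-1}$ is invariant and $V$ strictly decreases whenever $V(x) > \varepsilon^{-1}$, giving convergence to $\mathcal A_\varepsilon$ from $\mathcal A_1$. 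One must also check the saturation/sector precondition $x \in \mathcal S(G)$ is met on $\mathcal A_1$: this is precisely what \eqref{eq:LMI_conditions2} enforces, via the Schur complement stating $(G)_\ell P_i^{-1} (G)_\ell^\top \le \bar u^2$ on each sublevel set $\{x^\top P_i x \le 1\}$, after the same congruence with $H$.

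\textbf{Main obstacle.} The delicate point is bookkeeping the two switching indices simultaneously and making the linearizing substitution compatible with both: the $(1,1)$ block carries index $j = \sigma(x^+)$ while the $(2,2)$ and cross blocks carry $i = \sigma(x)$, and the sector/triggering slacks must be inserted with the correct index $i$ and the correct sign $(1-2i)$. Getting the congruence to land exactly on $\Phi_{i,j}$ — in particular verifying that the off-diagonal $-AH^\top - iBY$ and $-iBS^\top$ and the $(2,3)$ block $i(Y+Z)^\top$ appear with the right $i$-dependence (they vanish when $i = 0$ because then $\sigma(x)Kx = 0$ and no saturation term is active) — is where the algebra is least routine and most error-prone.
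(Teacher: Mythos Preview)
Your plan is essentially the paper's own proof: the same Lyapunov candidate, the same injection of $\mathcal L_1$ and $\mathcal L_2$ as S-procedure slacks, the same descriptor/Finsler trick keeping $x^+$ as a free variable linked by the system equality, the same congruence by $\mathrm{diag}(H,H,S,I)$ followed by a Schur complement on the $w$-block (minor bookkeeping detail: the paper takes $S=T^{-1}$, not $S=T$). Your reading of the index pair $(i,j)=(\sigma(x),\sigma(x^+))$ and of why the cross-blocks carry the factor $i$ is exactly the paper's.

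The one genuine step you omit is the positivity of $V$ itself. The theorem does \emph{not} require $P_i\succ 0$ individually (the paper emphasizes this in a remark after the proof), so $V(x)=x^\top P_{\sigma(x)}x>0$ cannot be taken for granted; without it the sets $\mathcal A_1$ and $\mathcal A_\varepsilon$ could be unbounded and the basin/attractor statements would collapse. The paper treats this as a separate Step~1: from Lemma~\ref{lem:M2} one has $V(x)\ge x^\top\bigl(P_{\sigma(x)}-(1-2\sigma(x))M\bigr)x$, and positive definiteness of $P_i-(1-2i)M$ is precisely the $(1,1)$ block $W_i-\mathcal Q_i\succ 0$ of $\Psi_{i,\ell}$ after congruence by $H^{-1}$. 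So your observation that \eqref{eq:LMI_conditions2} enforces $\mathcal A_1\subset\mathcal S(G)$ is correct but incomplete---the same LMI is doing double duty, and you need to invoke its other role as well.
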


\begin{proof}
The proof consists of three steps. The first one is based on proving the positive definiteness of the candidate Lyapunov function. In the second one, the forward increment of the Lyapunov function along the solutions of system \eqref{def:DSsys0} is addressed, ensuring the asymptotic convergence of the trajectories to attractors $\mathcal A_1$ and $\mathcal A_\varepsilon$. Finally, the invariance of the attractor $\mathcal A_\varepsilon$ when $w\in\Omega_{\lambda}$ and $w\neq 0$ is proved in the third part.

\underline{\textit{Step 1:} Positivity of $V$.} Consider the Lyapunov function given by
\begin{equation}\label{def:Lyap}
   V(x)= x^{\! \top}\!  P_{\sigma(x)} x, \quad \forall x\in\mathbb R^{n_x},
\end{equation}
where matrices $\{P_i\}_{i\in\{0,1\}}\in\mathbb S^{n_x}$ are not necessarily definite positive and are to be defined. The positive definiteness of $V(x)$ is established by noting that
\begin{align*}
V(x)&=x^\top P_{\sigma(x)} x= x^\top \big(P_{\sigma(x)} -\!(1 - 2\sigma(x))M\big) x + \!(1 - 2\sigma(x))x^\top Mx.    
\end{align*}

Recalling that Lemma \ref{lem:M2} guarantees that $(1 - 2\sigma(x))\, x^\top Mx \geq0$, this implies
\begin{equation*}
    V(x)\geq x^\top \big(P_{\sigma(x)} -\!(1 - 2\sigma(x)) M\big) x.
\end{equation*} 

The assumption on $\Psi_{i,\ell}(A,B)\succ0$ implies $W_{i} -(1 - 2iQ)\succ0$ for any $i\in \{0,1\}$. Thus, inequality $W_{\sigma(x)} -(1 - 2\sigma(x))Q)\succ0$ holds. Pre- and post-multiplying by $H^{-1}$ and its transpose, respectively, the last inequality becomes $(P_{\sigma(x)} -\!(1 - 2\sigma(x))M)\succ0$, which proves that $V(x)>0$. 

\underline{\textit{Step 2:} Convergence from $\mathcal A_1$ to $\mathcal A_\varepsilon$}. From now on, we will use the short-hand notation $\sigma=\sigma(x)$ and $\sigma^+=\sigma(x^+)$ along the proof, for the sake of simplicity. 
To accomplish statements \ref{obj:basin} and \ref{obj:attractor}, the forward increment of the Lyapunov function along the trajectories of system \eqref{def:DSsys}, expressed as $\Delta V(x)= {x^+}^{\! \top}  P_{\sigma^+} x^+ - x^{\! \top}  P_{\sigma} x$, requires verification of
\begin{equation}\label{objectives_equations}
    \Delta V(x)\leq 0,\quad\forall(x,w)\,\text{s.t.}
    \begin{cases}
        x\in\mathcal A_1, & (\text{i.e.}, \, x^{\top}P_{\sigma}x\leq 1),\\
        x\notin\mathcal A_\varepsilon, & (\text{i.e.}, \, x^{\top}P_{\sigma}x\geq \varepsilon^{-1}),\\
        w \in \Omega_{\lambda}, & (\text{i.e.}, \, w^{\top}w\leq \lambda),\\
        \mathcal L_2(x)\geq0 & (\text{i.e.}, \,  (1-2\sigma)x^\top Mx \geq 0).
    \end{cases}
\end{equation}

To do so, let us introduce the following quantity
\begin{equation*}
\begin{split}
\mathcal{L}(x,w)=\Delta V(x)+\underbrace{\mathcal L_2(x)}_{\geq 0 \mbox{ \eqref{lem:condM_indexed}} }+\bar{\mu}_1\underbrace{(x^{\top}P_\sigma x-\varepsilon^{-1})}_{\displaystyle\geq 0 \mbox{ if } x\notin\mathcal A_\varepsilon}+ \bar{\mu}_2\underbrace{(\lambda-w^{\top}w)}_{\displaystyle\geq 0 \mbox{ if } w \in \Omega_{\lambda}}-\ \mathcal L_1(x)
    \end{split}
\end{equation*}
where $\bar\mu_i\geq 0,~i=1,2$ are some positive  scalars to be defined. 

Let us first show that constraining $x$ to be in $\mathcal A_1$ implies that $\mathcal L_1(x)\leq 0$. To do so, consider condition $\Psi_{\sigma,\ell}\succeq 0$ in \eqref{eq:LMI_conditions2}. Pre- and post-multiplying by $\left[\begin{smallmatrix}
    H^{-1} &0 \\0& 1
\end{smallmatrix}\right]$ and its transpose, respectively, yields
\begin{equation*}
\begin{bmatrix}
P_{\sigma}\!-\!\mathcal{M}_{\sigma} & (H^{-1}Z^{\top}_{{\sigma}})_\ell\\(Z_{{\sigma}}H^{-\top})_\ell & \bar u^2_\ell
\end{bmatrix}\succeq 0, \quad \forall \ell=1,2,3.
\end{equation*}

Introducing matrices $G_\sigma =H^{-1}Z_\sigma$ for $\sigma=0,1$ and applying the Schur complement leads to the following inequality
\begin{equation*}\label{sat_relationship}
P_{\sigma}-\mathcal M_\sigma - \frac{1}{\bar u^2_\ell}x^{\top}(G^{\top}_{{\sigma}})_\ell(G_{{\sigma}})_\ell \succeq 0,  \quad \forall \ell = 1,2,3.
\end{equation*}

Therefore, for any $x$ in $\mathcal A_1$, we get 
\begin{equation*}
 \frac{1}{\bar u^2_\ell}x^{\top}(G^{\top}_{{\sigma}})_\ell(G_{{\sigma}})_\ell x\leq x^{\top}(P_{\sigma}-\mathcal M_\sigma) x=x^{\top}P_{\sigma}x - \underbrace{x^\top ((1-2\sigma)M)x}_{=\mathcal L_2(x)\geq0} \leq \underbrace{V(x) \leq 1 }_{x\in\mathcal A_1}, \quad \forall \ell = 1,2,3.
\end{equation*}

Hence, the feasibility of condition $\Psi_{\sigma,\ell}\succ0$ is ensured, provided that $\mathcal A_1\subset\mathcal{S}(G)$. Recalling  that $\sat(\sigma Kx)=\phi(\sigma Kx)+\sigma Kx$ and that $\phi(\sigma Kx)=\sigma\phi(Kx)$, due to the fact that $\sigma=0$ or $1$, Lemma~\ref{lem:M1} guarantees that
\begin{equation*}
\mathcal{L}_1(x) =2\sigma\phi( Kx)^\top T [  \sigma\phi(Kx)+\sigma Kx + G_{\sigma}x] \leq 0.
\end{equation*}


Hence, if one can guarantee that $\mathcal L(x,w)$ is negative definite, it necessarily implies the satisfaction of \eqref{objectives_equations}, where $\Delta V(x)\leq0$ under the particular constraints on $x$. This is due to the fact that in this case we would have $\Delta V(x)\leq \mathcal L(x,w)\leq 0$. Then, to prove that $\mathcal L(x,w)\leq0$, let us first introduce the augmented vector $\xi=\begin{bmatrix}{x^+}^{\top} & x^{\top} & \phi^{\top}(Kx) & w^{\top}\end{bmatrix}^{\top}$, in order to have 
\begin{equation}\label{eq:main_diseq}
   \mathcal{L}(x,w)= \xi^{\top} \begin{bmatrix} 
   P_{\sigma^{+}} &0 & 0 &0\\
        \ast &(-1+\bar{\mu}_1)P_{\sigma}\!+\!\mathcal{M}_{\sigma} & -\sigma( \sigma K+G_{\sigma})^{\top}T &0\\
        \ast & \ast & -2\sigma^2 T &0 \\ \ast &0 &0 &-\bar{\mu}_2I
    \end{bmatrix}\xi-\bar{\mu}_1\varepsilon^{-1}+\bar{\mu}_2\lambda.
\end{equation}

Then, noticing that $\sigma^2=\sigma\in\{0,1\}$ and selecting $\bar\mu_1=\mu\in(0,1)$ and $\bar\mu_2=\mu/(\varepsilon\lambda)$ yields 
\begin{equation}\label{eq:main_diseq2}
   \mathcal{L}(x,w)= \xi^{\top} \begin{bmatrix} 
   P_{\sigma^{+}} &0 & 0 &0\\
        \ast &-(1-\mu)P_{\sigma}\!+\!\mathcal{M}_{\sigma} & -\sigma( K+G_{\sigma})^{\top}T &0\\
        \ast & \ast & -2\sigma T &0 \\ \ast &0 &0 &-\frac{\mu}{\varepsilon \lambda} I
    \end{bmatrix}\xi.
\end{equation}

In addition, it follows from the expression of the discrete-time system \eqref{def:DSsys} that the following identity links the elements of $\xi$:
\begin{equation}\label{eq:finslercond}
   \xi^\top \begin{bmatrix}
        H^{-1}\\0\\0\\0
    \end{bmatrix} \begin{bmatrix}
        -I & A+\sigma B K & \sigma B & B_w
    \end{bmatrix}\xi=0, \quad \forall \xi\in\mathbb R^{3n_x+n_u},
\end{equation}
where we used the inverse of matrix $H$, solution to the LMI conditions. This yields 
\begin{equation*}
 \mathcal L(x,w)=\xi^\top  \begin{bmatrix} 
   P_{\sigma^{+}}-\mathrm {He}(H^{-1}) &H^{-1}(A+\sigma BK)& \sigma H^{-1}B & H^{-1}B_w\\
        \ast &-(1-\mu)P_{\sigma}\!+\!\mathcal{M}_{\sigma} & -\sigma( K+G_{\sigma})^{\top}T &0\\
        \ast & \ast & -2\sigma T &0 \\ \ast &0 &0 &-\frac{\mu}{\varepsilon \lambda}I
    \end{bmatrix}\xi.
\end{equation*}
 
Defining $S=T^{-1}$, and pre- and post-multiplying the previous matrix inequality by $\text{diag}(H,H,S,I)$ and its transpose, respectively, 
and selecting $Q=H MH^\top$, $Y=KH^\top$, $Z=GH^{\top}$ and $W_{i}=H P_{i}H^\top,~i\in\{0,1\}$, we arrive at 
   \begin{equation*}
  \mathcal L(x,w)\leq 0 \quad \Leftarrow \quad   
    \left[ 
    \begin{matrix}
             W_{\sigma^{+}} -\mathrm {He}(H) &AH^\top +\sigma BY & \sigma BS^\top&B_w \\
        \ast &-(1-\mu)W_{\sigma} \!+\!\mathcal{Q}_{\sigma} & -\sigma (Y+Z_{\sigma})^\top &0 \\ \ast & \ast & -2\sigma S &0 \\\ast &0 &0 &-\frac{\mu}{\varepsilon \lambda} I
    \end{matrix}\right]\preceq 0.
\end{equation*}
\indent The Schur complement is applied to the previous inequality, yielding, with a change of sign, condition $\Phi_{\sigma,\sigma^+}(A,B)\succeq 0$.
Consequently, inequality $\mathcal{L}(x,w)\leq 0$ is guaranteed if LMIs $\Phi_{\sigma,\sigma^+}(A,B)\succeq 0$ hold for all pairs $(\sigma,\sigma^+)$ in $\{0,1\}$, as in \eqref{eq:LMI_conditions1}. 
Therefore, conditions \eqref{eq:LMI_conditions1} and \eqref{eq:LMI_conditions2} ensure the local asymptotical stability of the closed-loop system \eqref{def:DSsys0}-\eqref{def:u} with $K=Y{H^{-1}}^{\top}$, for any initial condition $x_0\in\mathcal A_1$. 


\underline{\textit{Step 3:} Invariance.} It only remains to prove that the attractor $\mathcal A_\varepsilon$ is invariant. On the one hand, given $\varepsilon>1$ solution to conditions \eqref{eq:LMI_conditions1} and \eqref{eq:LMI_conditions2}, we get $\mathcal A_\varepsilon\subseteq\mathcal A_1$ and the convergence of the system trajectories to the set $\mathcal A_\varepsilon$. On the other hand, the satisfaction of $\Phi(A,B)\succ 0$ ensures that $\mathcal{L}(x,w)$ in \eqref{eq:main_diseq2} is negative semi-definite, yielding
\begin{align*}
V(x^+)&= V(x)
  + \underbrace{\mathcal{L}(x,w)}_{\le 0}
  + \underbrace{\mathcal{L}_1(x,w)}_{\le 0}
  - \underbrace{\mathcal{L}_2(x,w)}_{\ge 0} - \bar{\mu}_1\!\left(V(x)-\varepsilon^{-1}\right)
      - \bar{\mu}_2 \underbrace{(\lambda-w^{\top}w)}_{\ge 0} \notag\\
&\le (1-\bar{\mu}_1)V(x)
   + \bar{\mu}_1\varepsilon^{-1}.
\end{align*}
\indent Since $\bar{\mu}_1=\mu\in(0,1)$ and  $V(x)\leq\varepsilon^{-1}$, we have
$V(x^{+})\leq \varepsilon^{-1}$, which implies that $x^{+}\in\mathcal A_\varepsilon$, completing the proof.
\end{proof}

\begin{remark}
    Theorem \ref{th:Stabb} contains a simple but not trivial feature. The Lyapunov function that results from the provided analysis is positive definite, whereas the Lyapunov matrices $P_\sigma$ are not required to have the same property. This means that the individual functions $V(x)=x^\top P_ix$ for $i\in\{0,1\}$ are not necessarily positive definite. However, the positive definiteness is imposed on a specific partition of the state space. More specifically, we have 
    \begin{align}
        V(x)=x^\top P_0x>0,\quad \forall x\in\mathbb R^{n_x}\quad \mbox{ s.t. }\quad x^\top Mx\geq0,\nonumber\\
        V(x)=x^\top P_1x>0,\quad \forall x\in\mathbb R^{n_x}\quad \mbox{ s.t. }\quad x^\top Mx\leq0.\nonumber
    \end{align}
\end{remark}


\subsection{Discussion on Theorem \ref{th:Stabb} and optimization procedure}
Theorem~\ref{th:Stabb} provides feasibility conditions for the design of a stabilizing event-triggered controller. However, at this stage, no guarantees are provided regarding the optimality of the resulting control law with respect to any specific performance criterion. This section introduces several optimization frameworks that can be readily incorporated into the solution of Theorem~\ref{th:Stabb} within the LMI-based approach. Prior to delving into the details of these approaches, we first present a graphical interpretation of the various components of Theorem~\ref{th:Stabb}, which serves to clarify the rationale behind the proposed optimization formulation.
\subsubsection{Graphical interpretation of Theorem \ref{th:Stabb}}\label{sec:graphical}

To facilitate a deeper understanding of the main result, we provide a brief discussion on the interpretations and implications of Theorem~\ref{th:Stabb}. Figure~\ref{ellipsoids} offers a graphical representation of the key components of the theorem. Specifically, the figure illustrates the manifold $x^\top M x = 0$, which delineates the boundary between the regions where $\sigma = 0$ (top-left) and $\sigma = 1$ (bottom-right). Within each of these regions, the Lyapunov function $V(x) = x^\top P_\sigma x$ defines two level sets: $V(x) = 1$, which serves as an estimate of the basin of attraction $\mathcal{A}_1$, and $V(x) = \varepsilon^{-1}$, corresponding to the attractor $\mathcal{A}_\varepsilon$. 
\begin{figure}[t]
\centering
\includegraphics[width=0.45\textwidth]{./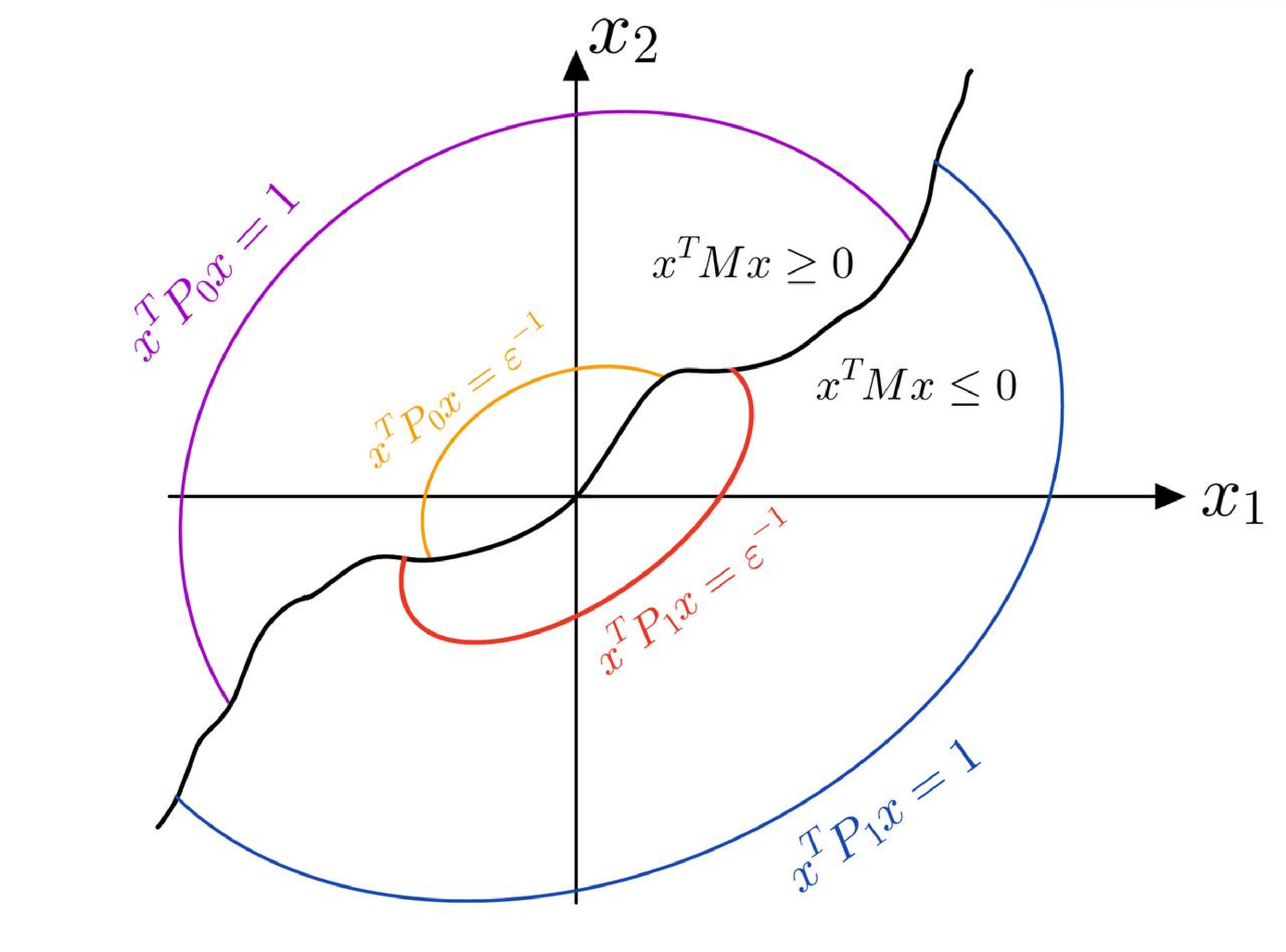}
\caption{Attractor and basin of attraction (plane projection for any two generic states, $x_1$ and $x_2$)}
\label{ellipsoids}
\end{figure}\\ \indent
In summary, the main contribution of Theorem~\ref{th:Stabb} is as follows. The positive definiteness of the Lyapunov function $V$ ensures that the set $\mathcal{A}_1$ is closed and bounded, thereby defining a well-defined region of admissible initial conditions. Statements $C_1$ and $C_2$ in Theorem~\ref{th:Stabb} then guarantee that, for any initial condition within $\mathcal{A}_1$, the trajectories of the event-triggered closed-loop system converge to the set $\mathcal{A}_\varepsilon$. Moreover, the invariance property ensures that once a trajectory enters $\mathcal{A}_\varepsilon$, it remains within this set for all subsequent times. 

\subsubsection{Optimization of the basin of attraction}
Due to the presence of input saturation, stability properties are generally no longer global, particularly for open-loop systems that are not asymptotically stable, as is the case for the HCW system, which is only marginally stable. In this context, a natural objective is to enlarge the approximation of the basin of attraction, i.e., the set of admissible initial conditions guaranteeing asymptotic stabilization. In Theorem~\ref{th:Stabb}, this estimation is given by the set $\mathcal A_1=\{x\in\mathbb R^{n_x},\mbox{ s.t. } x^\top P_\sigma(x)x\leq 1\}$. Hence, the size of $\mathcal{A}_1$ can be increased by minimizing the eigenvalues of the matrices $P_0$ and $P_1$. The corresponding optimization problem is formulated as follows.
\begin{opt} Optimization of the basin of attraction:
    \begin{equation}
\begin{aligned}
\min_{\mathcal{D}\in \mathbb H^{n_x},\eta >0 } \ &\eta\\
\textrm{s.t.} \ &\Phi_{i,j}(A,B)\!\succeq\!0,\ \Psi_{i,\ell}\!\succeq\!0,\ \begin{bmatrix}\eta I & \!\!\!I\\I &\!\!\!\mathrm {He}(H)\!-\!W_{i}\!+\! \mathcal{Q}_{i}\end{bmatrix}\!\succeq\! 0,\ \forall (i,j)=\{0,1\}^2,\ \forall \ell =1,\dots, n_u.
\end{aligned}
\end{equation}
\end{opt}
\begin{proof}
    Using the inequality $(W_i-\mathcal Q_i-H)^\top(W_i-\mathcal Q_i)^{-1}(W_i-\mathcal Q_i-H)\succeq 0$, which holds true because of the assumption on $ W_i-\mathcal Q_i\succ0$ implied by the constraint on $\Psi_{i,\ell}$, we get that $H^\top (W_i-\mathcal Q_i)^{-1}H \succeq\mathrm {He}(H)-(W_{i}- \mathcal{Q}_{i}) $. This yields
    $$
    \begin{bmatrix}\eta I & I\\I &H^\top (W_i-\mathcal Q_i)^{-1}H\end{bmatrix}\succeq \begin{bmatrix}\eta I & I\\I &\mathrm {He}(H)-W_{i}+ \mathcal{Q}_{i}\end{bmatrix}\succeq 0.
    $$

Applying the Schur complement to the first matrix on the left guarantees that $\eta I\succeq H^{-1} (W_i-\mathcal Q_i)H^{\top}=P_i-\mathcal M_i=P_i-(1-2i)M_i$. Therefore, minimizing $\eta>0$ implies the minimization of the eigenvalues of matrices $P_i$.
\end{proof}


 

\subsubsection{Optimization of the attractor}
The dynamics of system \eqref{def:DSsys} is subject to external disturbances $w$, which satisfies Assumption \ref{th:bounded_noise}. This implies that the closed-loop trajectories cannot converge asymptotically to the origin, but only to a neighborhood surrounding it, i.e. the attractor $\mathcal A_\varepsilon$. Accordingly, an additional optimization objective consists in minimizing the size of this attractor. As the attractor is characterized by the decision variable $\varepsilon>1$, we derive the following optimization problem:
\begin{opt} Optimization of the attractor:
\begin{equation}
\begin{aligned}
\max_{\mathcal{D}\in \mathbb H^{n_x}} \quad &\varepsilon\\
\textrm{s.t.} \quad &\Phi_{i,j}(A,B)\succeq 0,\ \Psi_{i,\ell}\succeq 0,\ \forall (i,j)=\{0,1\}^2,\ \forall \ell =1,\dots, n_u.
\end{aligned}
\end{equation}
\end{opt}


\subsubsection{Event-triggering rule}
To encourage switching between the two conditions, it is desirable for the matrix $M$ to be neither positive definite nor negative definite. Equivalently, this requirement can be expressed by imposing that $M$ possesses both strictly positive and strictly negative eigenvalues, which can be enforced by setting its trace to zero. This condition can be translated to the decision variable $Q$ in the optimization problem, leading to the constraint $\text{trace}(Q) = 0$. However, due to the presence of the term $\mathcal{W}_{\sigma} - \mathcal{Q}_{\sigma}$ in the LMIs, where the involved matrices have opposite signs, the optimization process naturally tends to drive the optimization to look for $\mathcal{Q}_{\sigma}\approx 0$. To ensure that both $Q$ and, consequently, $M$ remain nonzero, we propose to promote their magnitude by maximizing $\text{trace}(Q)$. It can be noticed that this feature further increases the probability that the triggering condition $x^{\top} M x$ lies in its positive region. Recalling that this corresponds to $\sigma = 0$, the system consequently spends more time in the region where no control action is applied, reducing the overall number of thruster firings.
This leads to the following optimization problem: 
\begin{opt} Reduction of the number of firing instants:
\begin{equation}
\begin{aligned}
\max_{\mathcal{D}\in \mathbb H^{n_x}} \quad &\text{trace}(Q) \\
\textrm{s.t.} \quad &\Phi_{i,j}(A,B)\succeq 0,\ \Psi_{i,\ell}\succeq 0,\ \forall (i,j)=\{0,1\}^2,\ \forall \ell =1,\dots, n_u.
\end{aligned}
\end{equation}
\end{opt}
\indent Reducing the number of firings is inherently beneficial, as each valve actuation consumes part of the thrusters lifetime. Triggering control only when necessary extends hardware life, allowing longer missions or contingency margins. Fewer firings also lessen induced torques, reducing attitude-control energy. Finally, minimizing burns decreases mode transitions, fault-protection activations, and thermal cycling, therefore simplifying fault management, lowering anomaly risk, and improving overall mission reliability.
Clearly, for a fixed total energy expenditure, a reduction in the number of firing instants implies that each individual firing must deliver a higher average impulse. This aspect, however, is not explicitly addressed within the current optimization framework. In other words, while our method effectively minimizes the frequency of firings, it does not independently control how the total impulse (and thus the energy) is distributed among them.
Future work should therefore focus on decoupling these two objectives, minimizing the number of firing instants and minimizing the total energy cost, so that they can be analyzed and optimized separately. It is indeed worth mentioning that the LMI framework offers the possibility of including, at a very low
off-line computational cost, the design of an optimized controller to decrease the total energy.
Incorporating this feature in future developments would enable the design of control strategies that explicitly balance the trade-off between the frequency and intensity of thruster firings.

\subsubsection{Tradeoff optimization}

In light of these discussions, it is worth noting that a trade-off can be achieved by formulating a combined optimization problem, defined for any non-negative weighting parameters $\alpha_1 \geq 0$, $\alpha_2 \geq 0$, and $\alpha_3 \geq 0$, as follows:

\begin{opt} Mixed optimization:
    \begin{equation}
\begin{aligned}
\min_{\mathcal{D}\in \mathbb H^{n_x},\eta >0 } \  &\alpha_1 \eta - \alpha_2 \varepsilon+\alpha_3\text{trace}(Q)\\
\textrm{s.t.} \  &\Phi_{i,j}(A,B)\succeq 0,\ \Psi_{i,\ell}\succeq 0,\ \begin{bmatrix}\eta I & \!\!\!I\\I &\!\!\!\mathrm {He}(H)\!-\!W_{i}\!+\! \mathcal{Q}_{i}\end{bmatrix}\!\succeq \! 0,\ \forall (i,j)=\{0,1\}^2,\ \forall \ell =1,\dots, n_u.
\end{aligned}
\end{equation}
\end{opt}

\section{Numerical application}\label{sec:simulations}
Simulations are performed in \texttt{Matlab} after solving the optimization problem from Theorem \ref{th:Stabb} once by using \texttt{cvx solver}. From its solution, it is possible to compute the gain matrix $K$ and the switching condition matrix $M$. The parameters related to the noise handling and to the maximization function are selected as $\lambda=10^{-7}$, $\mu=0.04$, $\alpha_1=1$, $\alpha_2=100$, and $\alpha_3=1$. The proposed control law is validated through comparison with an MPC scheme formulated in a simple formulation, considering a quadratic cost and input saturation constraints. This choice provides a straightforward baseline for comparison; however, it is worth emphasizing that more sophisticated MPC frameworks for spacecraft rendezvous, such as the one described in \citep{rebollo2024mpctrackingappliedrendezvous}, achieve stronger performance and theoretical guarantees on feasibility and stability. More precisely, the MPC scheme used in this paper is based on the solution, at each sampled instant $t_k$ of the simulation, of the following minimization problem:
 \begin{equation}
  \begin{array}{rl}  \displaystyle \mathop{min}_{u_F(t_k)\in([-\bar u, \bar u]^{n_u})^{N_p}
  } &
    \frac{1}{2}u_F(t_k)^{\top} \tilde{H}u_F(t_k)+\tilde{f}^{\top}(x(t_k))u_F(t_k)\\
    \mbox{s.t } & x_k=\hat{x}(t_k),\quad x_{k+j+1}=Ax_{k+j}+Bu_{k+j}, \qquad  \text{for} \ j=0,...N_p-1
    \label{eq:eq3.16}
    \end{array}
    \end{equation}
where $u_F(t_k)$ is the augmented vector made of all the future control inputs, starting from $u(t_k)$,  over the considered prediction horizon, of length $N_p$.  $\hat{x}(t_k)$ represents the initial condition computed as the last instant of the previous iteration.
$\tilde{H}$ and $\tilde{f}(x_k)$ are properly selected weighting matrices for the minimization function. In particular, $\tilde{H}=\tilde{G}_u^{\top}\tilde{Q}_f\tilde{G}_u+\tilde{R}_f+\tilde{G}_{u_N}^{\top}\tilde{P}\tilde{G}_{u_N}$ and $\tilde{f}(x_k)=2x_k(\tilde{F}^{\top}\tilde{Q}_f\tilde{G}_u+\tilde{F}_{N}\tilde{P}\tilde{G}_{u_N})$, where $
(\tilde{F})_{ij} = A^i$ for all $i=1\dots, N_p$ if $j = 1$ and $0$ otherwise, $(\tilde{G}_u)_{ij} = A^{i-j} B$ for all $i= 1 \dots  N_p$ and all $j=1 ,\dots,  N_p,$ such that $i \ge j$ and where $\tilde{Q}_f$ and $\tilde{R}_f$ are properly selected weight block-diagonal matrices. $\tilde{G}_{u_N}$ and $\tilde{F}_{N}$ are matrices made up of the last $n_x$ rows of $\tilde{G}_u$ and $\tilde{F}$, respectively, and $\tilde{P}=10P_{LQR}$, with $P_{LQR}$ solution to the discrete-time algebraic Riccati equation.

\subsection{Linear simulation model with additive disturbances}
The control law is first validated by using the same propagation and disturbance model used in the control law: the HCW equations and the additive noise, respectively. Only by using this setup and by selecting initial conditions inside the basin of attraction given by $\mathcal{A}_1$, we have a guarantee on the stability of our control law. The simulation time and sampling time are chosen respectively as \( T_{\text{sim}} = 40\,\text{min} \) and \( T = 10\,\text{s} \), resulting in $N_{sim}=240$ control intervals. The orbit is selected with a radius of $R_0=6878\,\text{km}$, corresponding to a mean motion $n = 0.0011 \,\text{rad/s}$. The magnitude of the maximum input velocity for a single thruster is  $\bar{u} = 0.2\, \ m/s$. Regarding the MPC, the selected prediction horizon is $N_p=N_{sim}/2$. Three separate initial conditions are evaluated, all of them internal to the basin of attraction (meaning that for all of them $x_0^{\top}P_{\sigma(x_0)}x_0\leq1 $ holds). The control objective is to stabilize the chaser at the target state. Nevertheless, because of the effect of noise, as discussed in Section~\ref{sec:graphical}, convergence can only be guaranteed to a neighborhood of the target, characterized by the attractor $x^{\top}P_{\sigma(x)}x\leq\varepsilon^{-1}$. For each initial condition, Montecarlo simulations with 100 random realizations are performed. The values of the statistical properties of the white noise $w$ are the same used inside the optimization problem. The sets of initial conditions are:







\begin{table}[H]
    \centering
    
    \begin{tabular}{|c|c|c|c|c|c|c|}
        \cline{2-7}
        \multicolumn{1}{c|}{}& $r_x\, \ [m]$ & $r_y\, \ [m]$ & $r_z\, \ [m]$ & $v_x\, \ [m/s]$ & $v_y\, \ [m/s]$ & $v_z\, \ [m/s]$ \\
        \hline
        Case 1 & -180 & 220 & -100 & -0.1 & 0.15 & 0.15 \\
        Case 2 & 50 & 170 & -140 & 0.15 & -0.1 & -0.15\\
        Case 3 & -30 & 200 & 170 & 0.1 & -0.15 & 0.1 \\
        \hline
    \end{tabular}
      \caption{Initial conditions for three simulation runs}
  \label{tab:initial_conditions}
\end{table}

A 3D plot for the  superimposition of 10 random realizations for each one of the three cases, is represented in Fig. \ref{fig_chaser_3D_trajectory}.


\begin{figure}[H]
    \centering
    \begin{minipage}[b]{0.48\textwidth}
        \centering
        \includegraphics[width=\textwidth]{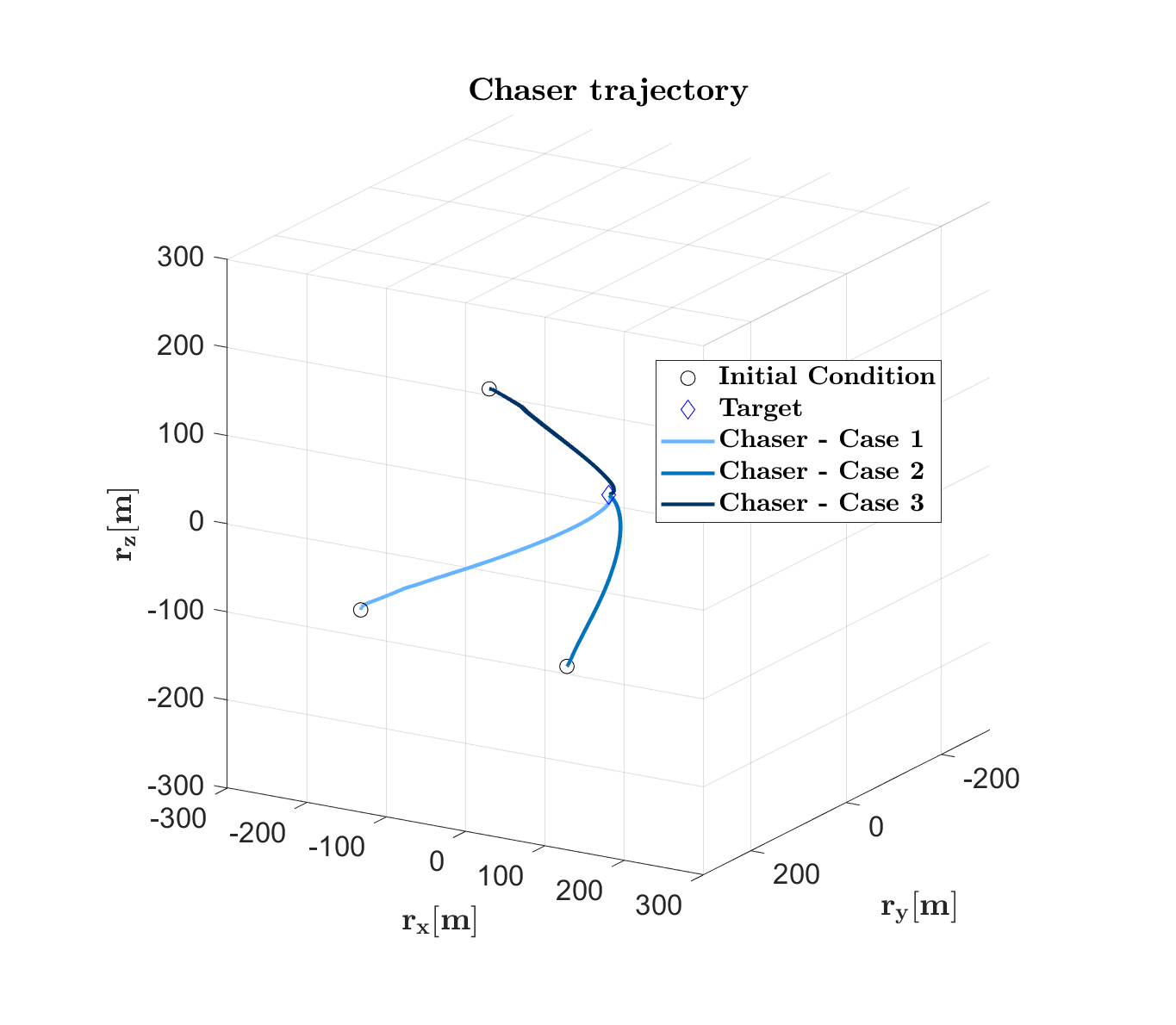}
        \caption{Chaser trajectories for three different initial conditions (10 random realizations each) using the ETC - Linear simulation model with additive disturbances}
        \label{fig_chaser_3D_trajectory}
    \end{minipage}
    \hfill
    \begin{minipage}[b]{0.48\textwidth}
        \centering
        \includegraphics[width=\textwidth]{./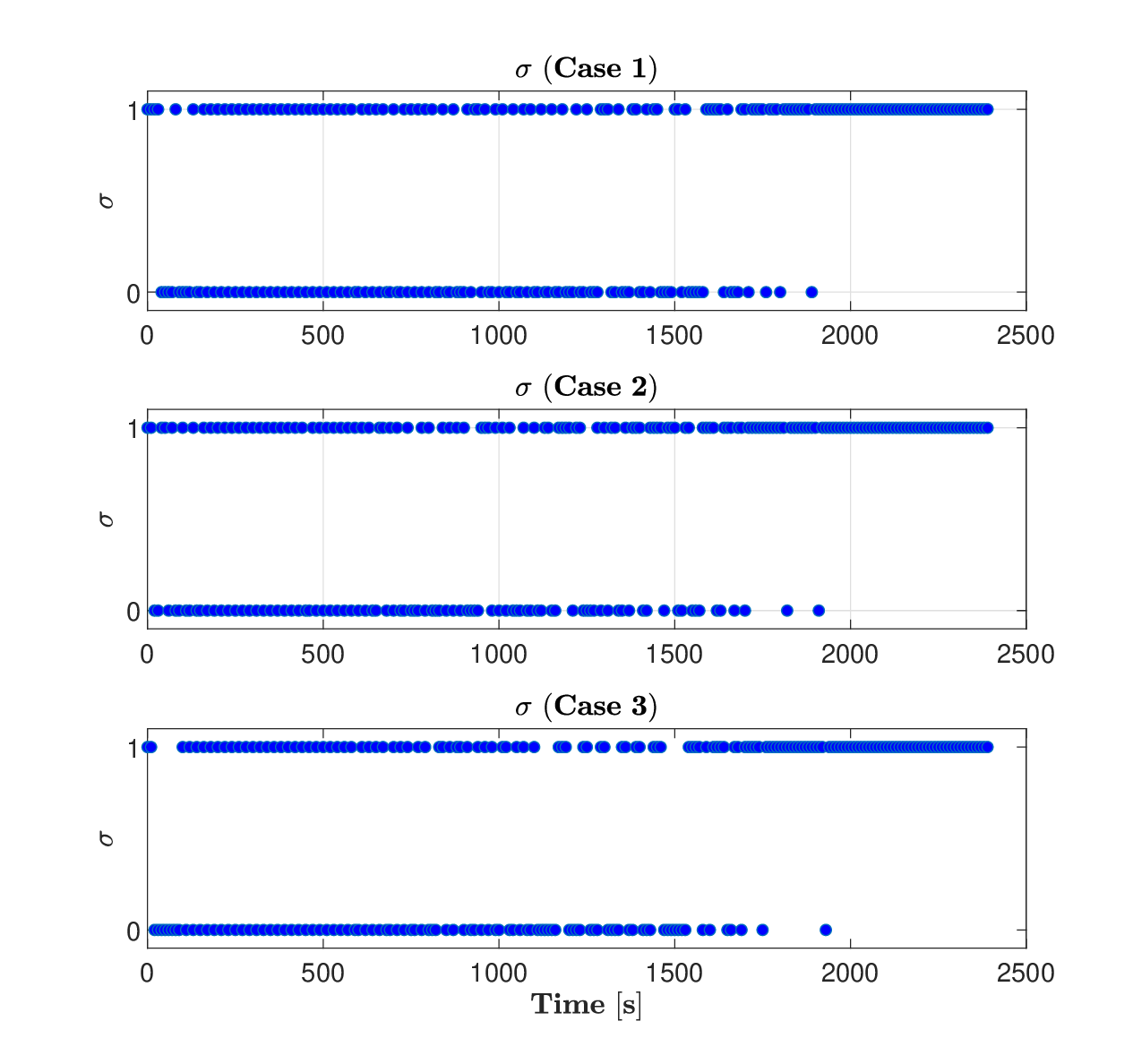}
        \caption{Values of the $\sigma$ parameter for three different initial conditions (single random realization) using the ETC - Linear simulation model with additive disturbances}
        \label{fig_sigma_3simulations}
    \end{minipage}
\end{figure}

As shown in Figure \ref{fig_chaser_3D_trajectory}, which represents the superimposition of 10 random realizations for each one of the initial conditions in Table~\ref{tab:initial_conditions}, the chaser is always able to achieve its goal and reach the target location. To better appreciate the effectiveness of our switching law, in Fig. \ref{fig_sigma_3simulations}, the values assumed by the $\sigma$ parameter are shown for a single realization of each of the three cases of initial condition. The switching law proves effective in alternating between the partitions of the state-space that require or not a control action. 

A comparison with MPC, with prediction horizon $N_p=N_{sim}/2$ and a single random realization for Case 1 in Table~\ref{tab:initial_conditions}, in terms of relative positions and velocities, is provided in Figures \ref{fig_r_ETCvsMPC} and \ref{fig_v_ETCvsMPC}, respectively. The alternation between red and green segments for the ETC solution evidences the switching between the two possible modes: red intervals represent the natural motion following the application of an \textcolor{red}{?}  (CL standing therefore for closed-loop); green ones indicate instead natural motion without an initial impulse application (with OL standing for open-loop). The gray and dashed trajectories represent the MPC solution. For both control strategies, the control objectives are achieved in the chosen simulation time. A plot of the control inputs history of the ETC is depicted in Fig. \ref{fig_control_inputs}.

\begin{figure}[t]
    \centering
    \begin{minipage}[b]{0.48\textwidth}
        \centering
        \includegraphics[width=\textwidth]{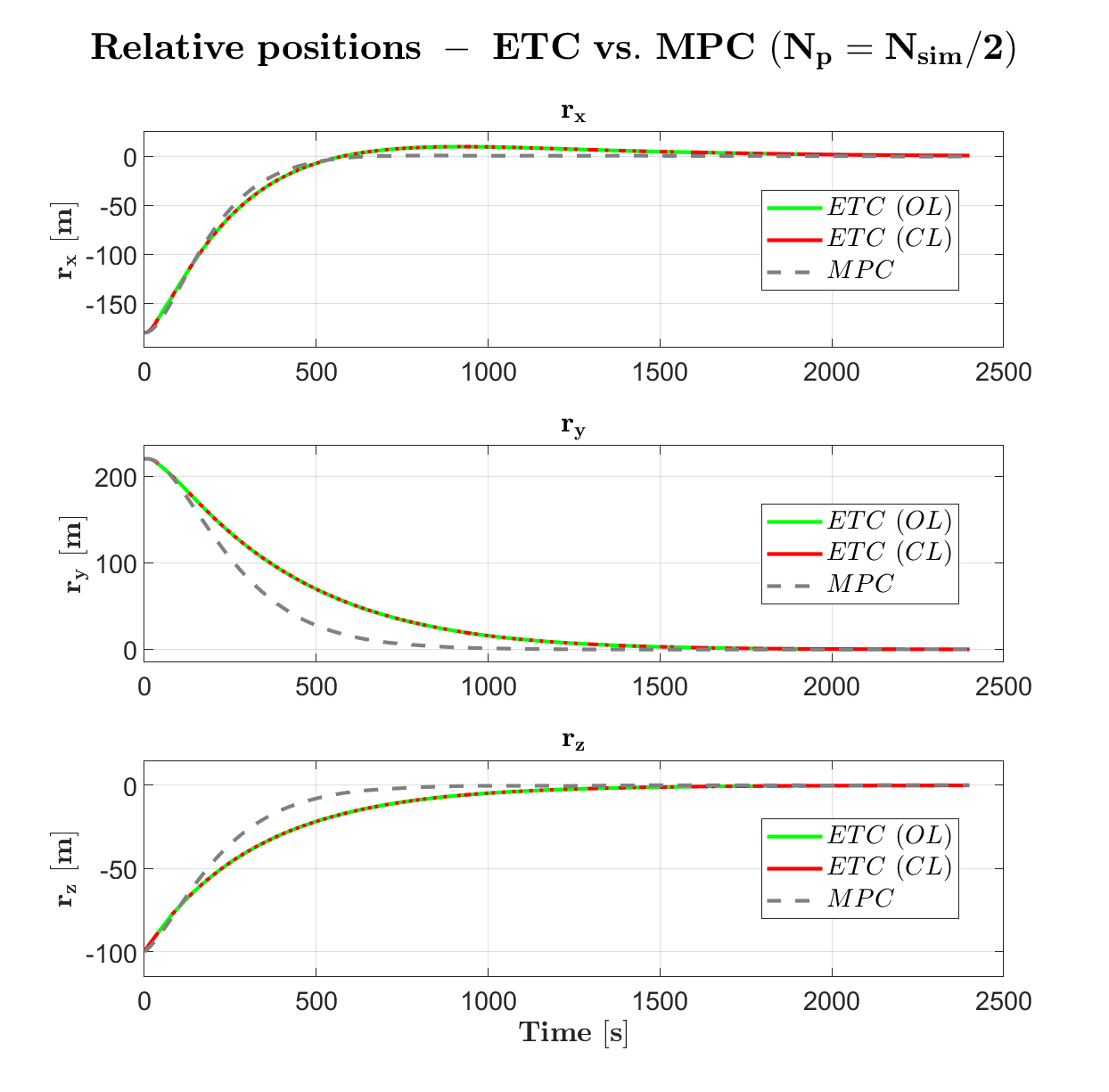}
        \caption{Evolutions of the chaser relative positions using the ETC (red: intervals following the application of an impulse; green: intervals without an initial impulse application) and MPC (gray dashed lines) strategies - Case 1 (single random realization) - Linear simulation model with additive disturbances}
        \label{fig_r_ETCvsMPC}
    \end{minipage}
    \hfill
    \begin{minipage}[b]{0.48\textwidth}
        \centering
        \includegraphics[width=\textwidth]{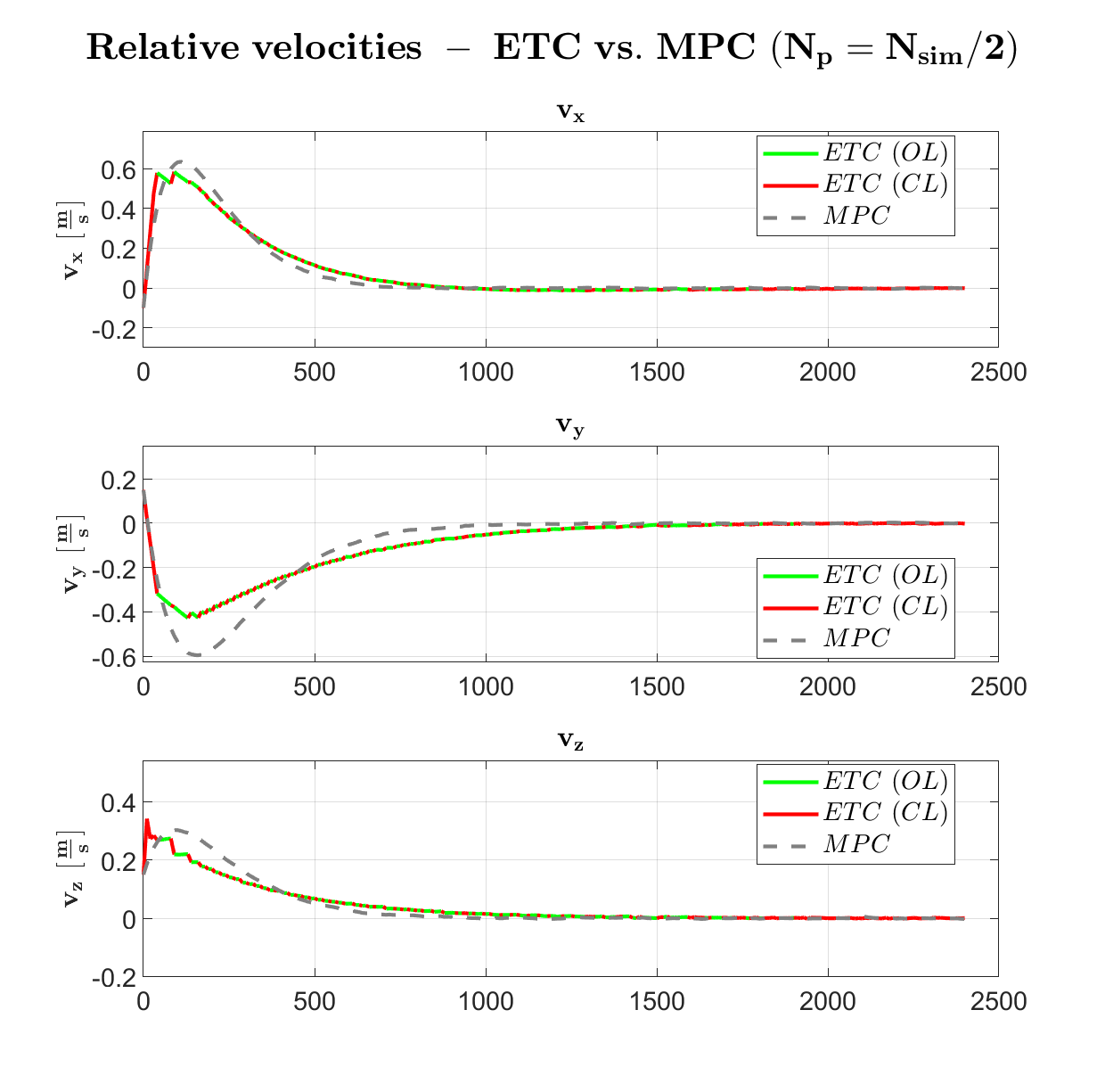}
        \caption{Evolutions of the chaser relative velocities using the ETC (red: intervals following the application of an impulse; green: intervals without an initial impulse application) and MPC (gray dashed lines) strategies - Case 1 (single random realization) - Linear simulation model with additive disturbances}
        \label{fig_v_ETCvsMPC}
    \end{minipage}
\end{figure}

\begin{figure}[H]
\centering
\includegraphics[width=0.45\textwidth]{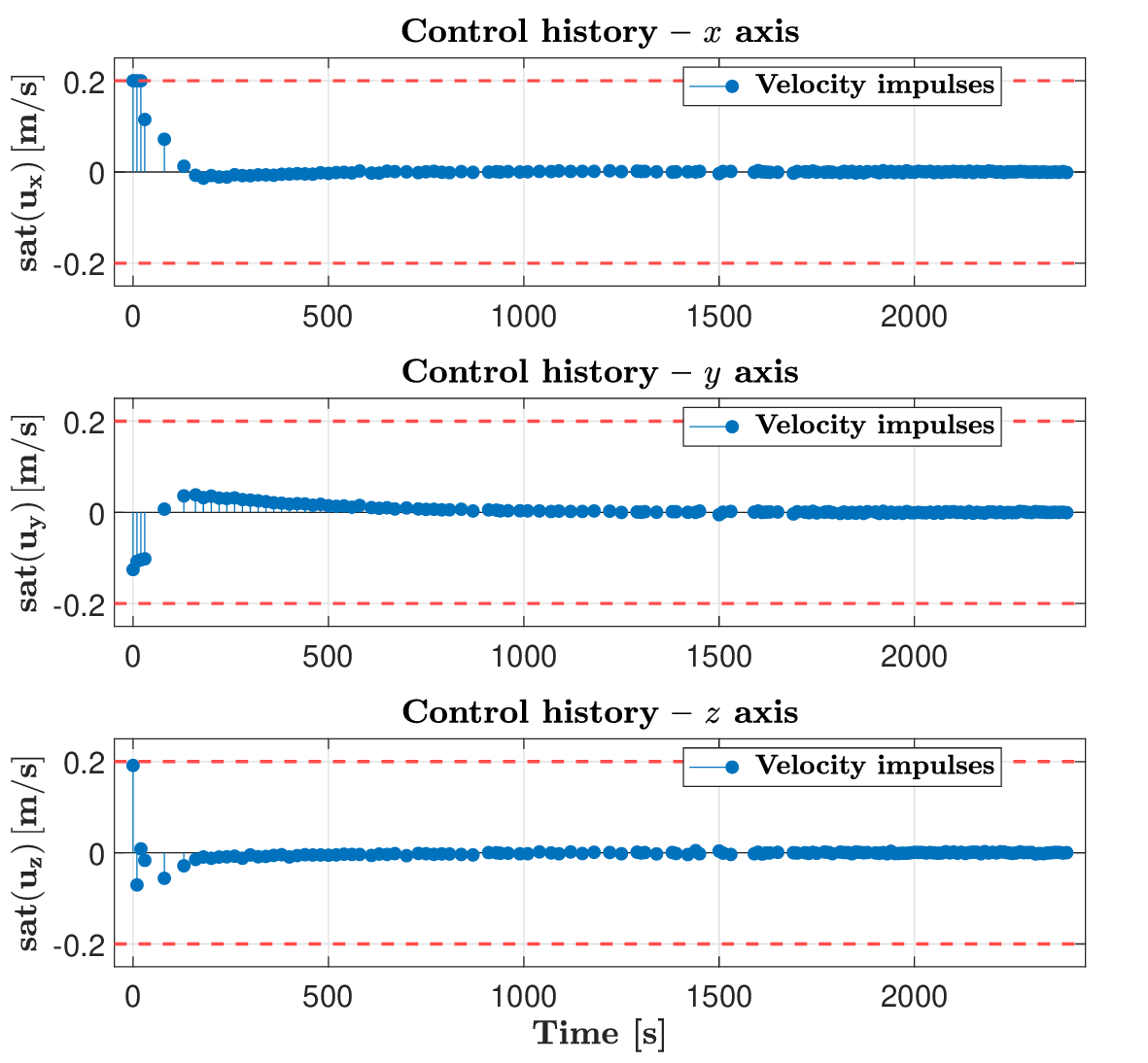}
\caption{Control history - Case 1 (single random realization) - Linear simulation model with additive disturbances}
\label{fig_control_inputs}
\end{figure}

To better put in comparison the two methods, the following table evidences the values of different control performances, averaged over all random realizations:

\begin{table}[H]
    \centering
    \begin{tabular}{c|cc|cc|} 
        \cline{2-5}
         & \multicolumn{2}{c|}{$\bar{u}_{tot}\, [\mathrm{m/s}]$} 
         & \multicolumn{2}{c|}{$\% \text{ of firing instants}$} \\
        \cline{2-5}
         & ETC & MPC\ & ETC & MPC\\
        \hline
        \multicolumn{1}{|c|}{Case 1} & 2.937 & 3.126 &  57.92 & 100 \\
        \multicolumn{1}{|c|}{Case 2} & 2.527 & 2.353 &  62.92 & 100 \\
        \multicolumn{1}{|c|}{Case 3} & 2.427 & 2.535 &  59.58 & 100 \\
        \hline
    \end{tabular}
        \caption{Performance indexes: ETC vs MPC}
 \label{tab:performance_indexes}
\end{table}


In Table~\ref{tab:performance_indexes}, $\bar{u}_{tot}$ is the total energy cost, given by the value of $u_{tot}=\sum_{k=1}^{N_{sim}}|\text{sat}(u_{x_k})|+|\text{sat}(u_{y_k})|+|\text{sat}(u_{z_k})|$, averaged over all random realizations. \\
\indent As shown in the table, although our optimization problem primarily targets a reduction in the number of firing instants rather than directly minimizing total energy expenditure, our approach also achieves a comparable total energy cost compared to MPC. Nevertheless, the considerable decrease in the number of thruster activations remains indeed the most significant improvement over MPC, since this method typically generates a control input at every sampling instant.\\
\indent The second major advantage of our approach clearly lies in the significantly reduced computational time: in MPC, $N_{sim}$ optimization problems must be solved in online using the limited on-board hardware of the chaser satellite. By contrast, in the ETC strategy, only a single optimization problem is solved, and this is done offline, allowing the use of powerful ground-based computational resources.






\subsection{Non-linear simulation model}
The robustness of the proposed control strategy is now validated against a more realistic case, where the propagation model of both spacecraft is based on the 2-Body Problem \eqref{2bp}. The dynamics of the target and chaser are propagated separately, and the relative state is reconstructed by subtraction and transformation into the LVLH frame to be evaluated by the switching law.\\
\indent The target orbit is selected as circular (eccentricity $e=0$), with an inclination $i=30$°, and with a right ascension of the ascending node (RAAN) $\Omega=45$°, again with a radius $R_0=6878$ km.\\
\indent To increase the realism of the scenario, disturbance sources beyond simple additive noise are considered. In particular, two types of perturbation are taken into account: imperfect thrust execution and sensor noise. The first arises from the fact that the actual thrust may differ from the commanded value due to imperfections in the construction of the thrusters and possible misalignments. This is modeled as a random disturbance added to the commanded thrust ($\delta u_i$), combined with a random rotation (assumed to involve small angles ($\delta \theta_i$)) applied to it, so that the real thrust contribution at time $t_k$ results in $u^{real}(x(t_k))=[\sat(\tilde{u}_x(x(t_k)))\,\,
            \sat(\tilde{u}_y(x(t_k)))\,\,\sat(\tilde{u}_z(x(t_k)))]^{\top}$, where
\begin{equation}\label{def:u_noise}
    \tilde{u}(x(t_k))=R_{\delta\theta}(\delta\theta(t_k))({\sigma}(x(t_k))( Kx(t_k)+\delta u(t_k))),
\end{equation}
and where $R_{\delta\theta}$ is a given rotation matrix for small angles.\\
\indent The sensor noise, on the other hand, is modeled as random noise affecting the measured position ($\delta r_i)$ and velocity ($\delta v_i$) at each sampling instant. Since this noisy measurement is used as input to the switching law to determine the appropriate control mode, such errors may lead the control law to select an incorrect strategy at a given time:

\begin{equation}
\label{def:trig_indexed_noise}
\sigma(x(t_k)) =
\begin{cases}
0, & \text{if }\  \big(x(t_k) + 
\delta x(t_k) \big)^\top\! M\big(x(t_k) + 
\delta x(t_k) \big) \geq 0 \\
1, & \text{if }\  \big(x(t_k) + 
\delta x(t_k) \big)^\top\! M \big(x(t_k) + 
\delta x(t_k) \big) \leq 0
\end{cases},
\end{equation}

with $\delta x=[\delta r\, \,\delta v]^{\top}$.
Table \ref{tab:disturbance_parameters} specifies the values of means and standard deviations of the random noise contributions, taking into account the increase of sensor accuracy in the last phase of the rendezvous (considered as when all the relative distances fall inside a range of 20 meters).

\begin{table}[H]
    \centering 
    \resizebox{\textwidth}{!}{%
    \begin{tabular}{|c|c|c|cc|cc|}
        \cline{4-7}
  \multicolumn{1}{c}{} & \multicolumn{1}{c}{} & \multicolumn{1}{c|}{} 
& \multicolumn{1}{c}{$r_i>20\,\mathrm{m}$} & \multicolumn{1}{c|}{$r_i<20\,\mathrm{m}$} & \multicolumn{1}{c}{$r_i>20\,\mathrm{m}$} & \multicolumn{1}{c|}{$r_i<20\,\mathrm{m}$}
\\
        \cline{2-7}
        \multicolumn{1}{c|}{} & $\delta u_i\, [\mathrm{m/s}]$ 
        & $\delta \theta_i\, [\mathrm{deg}]$ 
        & \multicolumn{2}{c|}{$\delta r_i\, [\mathrm{m}]$} 
        & \multicolumn{2}{c|}{$\delta v_i\, [\mathrm{m/s}]$} \\
        \hline
        Mean & $\bar{u}/1000$ & 1 & $10^{-1}$ & $10^{-2}$ & $10^{-2}$ & $10^{-3}$ \\
        Standard deviation & $10^{-6}$ & $10^{-2}$ & $5\times10^{-3}$ & $5\times10^{-4}$ & $5\times10^{-4}$ & $5\times10^{-5}$ \\
        \hline
    \end{tabular}}
     \caption{Statistical properties of the disturbances}
\label{tab:disturbance_parameters}
\end{table}

Figures \ref{fig_r_ETCvsMPC_NL} and \ref{fig_v_ETCvsMPC_NL} show again the evolution of relative positions and velocities, respectively, using both the ETC and the MPC methods, where the simulations with MPC adopt the same the same propagation and disturbance models utilized for ETC and a prediction horizon equal to $N_p=N_{sim}/2$.


\begin{figure}[H]
    \centering
    \begin{minipage}[b]{0.48\textwidth}
        \centering
        \includegraphics[width=\textwidth]{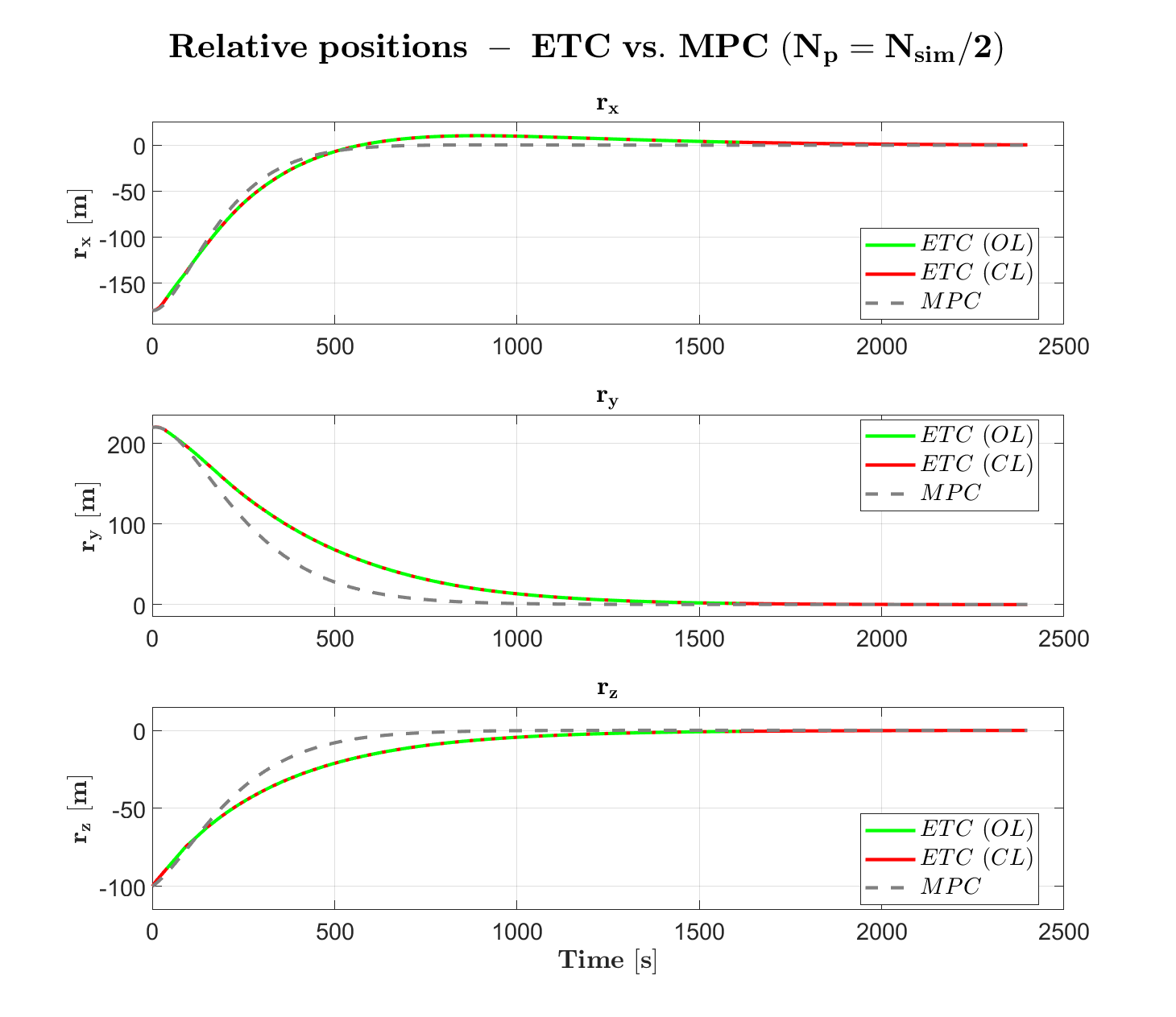}
        \caption{Evolutions of the chaser relative positions using the ETC (red: intervals following the application of an impulse; green: intervals without an initial impulse application) and MPC (gray dashed lines) strategies - Case 1 (single random realization) - Non-linear simulation model with imperfect thrust application and sensor noise}
        \label{fig_r_ETCvsMPC_NL}
    \end{minipage}
    \hfill
    \begin{minipage}[b]{0.48\textwidth}
        \centering
        \includegraphics[width=\textwidth]{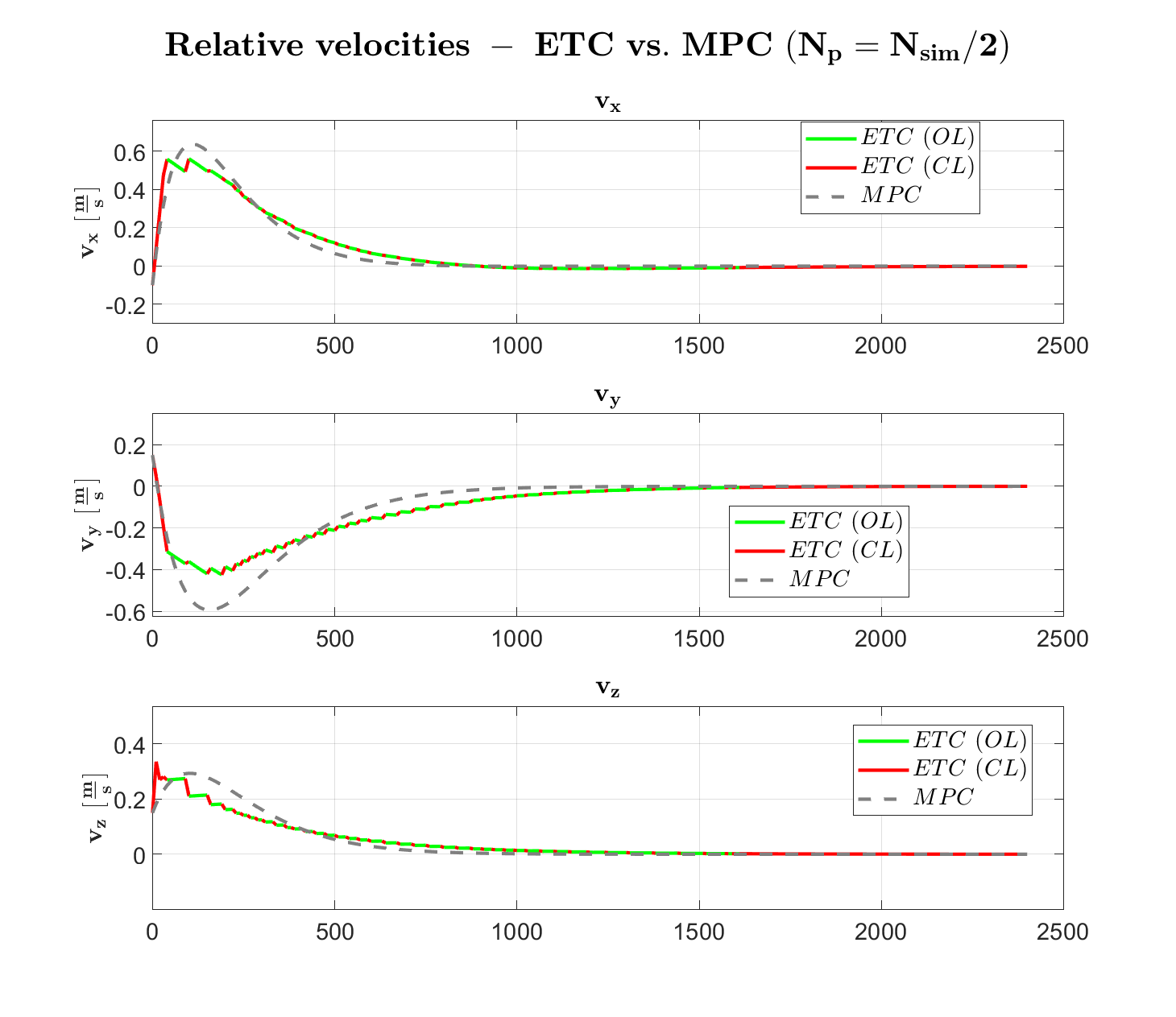}
        \caption{Evolutions of the chaser relative velocities using the ETC (red: intervals following the application of an impulse; green: intervals without an initial impulse application) and MPC (gray dashed lines) strategies - Case 1 (single random realization) - Non-linear simulation model with imperfect thrust application and sensor noise}
        \label{fig_v_ETCvsMPC_NL}
    \end{minipage}
\end{figure}

As in the linear case, the control strategies successfully drive the chaser to the target location. Table~\ref{tab:performance_indexes_NL} provides performance comparisons for this case:

\begin{table}[H]
    \centering
    \begin{tabular}{c|cc|cc|} 
        \cline{2-5}
         & \multicolumn{2}{c|}{$\bar{u}_{tot}\, [\mathrm{m/s}]$} 
         & \multicolumn{2}{c|}{$\% \text{ of firing instants}$} \\
        \cline{2-5}
         & ETC & MPC\ & ETC & MPC\\
        \hline
        \multicolumn{1}{|c|}{Case 1} & 2.623 & 3.089 &  51.67 & 100 \\
        \multicolumn{1}{|c|}{Case 2} & 2.224 & 2.330 &  70.83 & 100 \\
        \multicolumn{1}{|c|}{Case 3} & 1.957 & 2.451 &  48.33 & 100 \\
        \hline
    \end{tabular}
        \caption{Performance indexes: ETC vs MPC}
        \label{tab:performance_indexes_NL}
\end{table}

As can be seen from these results, the considerations regarding performances that were made for the linear case with simple additive disturbances extend also to this more complex scenario. Furthermore, the results demonstrate that the proposed ETC method exhibits strong robustness even in the presence of unmodeled dynamics and perturbations. In certain cases, the total energy expenditure and firing frequency are comparable to those observed in the linear case with additive disturbances, and in some instances even lower. This behavior can be attributed to the fact that, in the additive noise scenario, disturbances act continuously throughout the propagation intervals, whereas in the present case the chosen disturbances are applied only at the sampling instants. 

\section{Conclusions}\label{sec:conclusions}
This work presents an event-triggered control strategy for the spacecraft rendezvous problem, formulated within the HCW framework with impulsive inputs and explicitly accounting for input saturation and process noise. By introducing a switching mechanism based on state-dependent triggering conditions, the proposed method enables control inputs to be applied only when necessary, significantly reducing the number of firing instants compared to conventional periodic control approaches such as MPC. As a main feature, the control design is based on the solution of a single offline LMI optimization problem, through Lyapunov-based analysis, thereby extensively reducing the overall computational effort of the control implementation.\\
\indent Numerical simulations demonstrate the effectiveness of the proposed strategy across multiple rendezvous scenarios, taking into account different sources of disturbances. The results confirm that the event-triggered controller successfully guides the chaser to the target while achieving notable reductions in actuation frequency and comparable fuel efficiency relative to MPC.\\
\indent Future developments will focus on extending the proposed methodology to  a more realistic physical setting and to more satisfactory control solutions.
The inclusion of safety constraints, including line-of-sight \citep{doi:10.2514/1.29590} and collision avoidance, within the switching logic will be the main subject of future research. Furthermore, having considered in this work the simple scenario of a circular orbit, the adopted control strategy could be extended to the more complex case of an elliptical orbit (\citep{tschanuer},  \citep{Yamanaka}). Finally, future studies will investigate the inclusion of the Minimum Impulse Bit (MIB) constraint and its impact on feasibility and control performance.\\








\section*{Acknowledgments}
The work of T. del Carro and G. Portilla and A. Seuret are supported by grant ATR2023-145067 funded by MICIU/ AEI /10.13039/ 501100011033/. R. Vazquez acknowledges support of grant PID2023-147623OB-I00 funded by MICIU/AEI/10.13039/501100011033 and by ``ERDF A way of making Europe".

\section*{Declaration of Use of Artificial Intelligence}
Artificial intelligence was used exclusively to refine the English translation of this paper.

\bibliography{bibl_2026.bib}

\end{document}